\documentclass[11pt]{extarticle}
 
\usepackage{graphicx} %
\usepackage[inkscapelatex=false]{svg}

\usepackage[letterpaper,top=2cm,bottom=2cm,left=2cm,right=2cm,marginparwidth=3.5cm]{geometry}
\usepackage{fancyhdr}
\usepackage{datetime}

\usepackage{amsmath}
\usepackage{amssymb}
\usepackage{amsthm}
\usepackage{amsfonts}
\usepackage{graphicx}
\usepackage{subcaption} 
\usepackage{xspace}
\usepackage[colorlinks=true, allcolors=blue]{hyperref}
\usepackage{enumerate}
\usepackage{color}
\usepackage{url}
\usepackage{centernot}
\usepackage{xspace}
\usepackage{hyperref}
\usepackage{enumitem}

\newcommand{\cF}{\mathcal{F}}

\newcommand{\cT}{\mathcal{T}}

\newcommand{\cS}{\mathcal{S}}
\newcommand{\cP}{\mathcal{P}}
\newcommand{\cE}{\mathcal{E}}
\newcommand{\cR}{\mathcal{R}}

\newcommand{\Exp}{\mathbb{E}}
\newcommand\eps{\varepsilon}

\newcommand\R{\mathbb{R}}
\newcommand\defnemph[1]{\underline{\emph{#1}}}
\newtheorem{theorem}{Theorem}
\newtheorem{lemma}[theorem]{Lemma}
\newtheorem{proposition}[theorem]{Proposition}

\newtheorem*{defn}{Definition}
\newtheorem*{theorem*}{Theorem}
\newtheorem*{lemma*}{Lemma}

\newtheorem{question}{Question}

\theoremstyle{remark}

\newcounter{marginalnote}

\newcounter{marginalnoteb}

\title{Approximately Decoding the  Colour Code}
\author{Mark Walters}
\date{\today}

\begin{document}
\maketitle
\begin{abstract}
Recently we showed that minimum weight decoding in the (6.6.6~planar) colour code is
NP-hard. However, it remained an open question as to whether it was
possible to approximate the minimum weight decoding arbitrarily closely in polynomial time. In this
paper we prove that it is possible: for any $\eps>0$ there is an
polynomial time algorithm that, given a syndrome, can find an error-set
generating that syndrome whose weight is at most $1+\eps$ times the
weight of the minimum weight decoding. As a consequence we see that, for any
$\eps>0$, there is a polynomial time algorithm that can
correct all errors of weight up to $(1-\eps)d/2$ in the distance $d$
colour code (so almost up to the theoretical $d/2$ limit).

The polynomial we give is impractically large, but it does
open the door for sensible polynomial time algorithms that approximate
minimum weight decoding and, in particular, shows that approximate decoding is
not NP-hard.
\end{abstract}
\section{Introduction}
\begin{figure}[t]
   \begin{subfigure}[t]{0.5\columnwidth}
     \includegraphics[width=1\columnwidth]{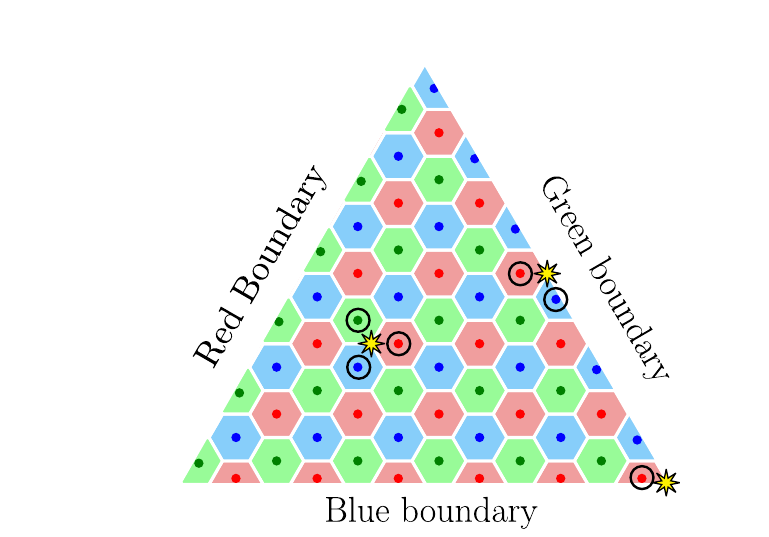}%
     \caption{\label{fig:bare-lattice-non-dual}}
   \end{subfigure}
   \begin{subfigure}[t]{0.5\columnwidth}
     \begin{center}
       \reflectbox{\includegraphics[width=0.55\columnwidth,angle=90]{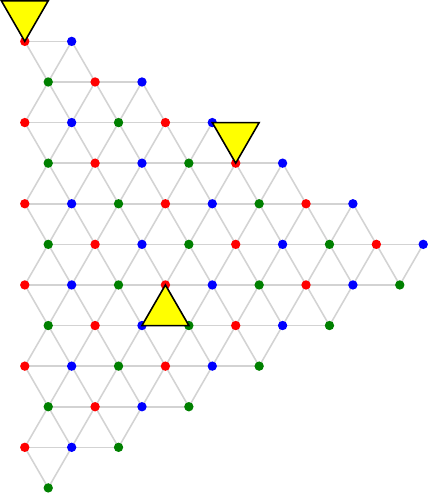}}%
     \caption{\label{fig:bare-lattice-dual}}
     \end{center}
   \end{subfigure}

  \caption{\label{fig:bare-lattice}\textbf{The colour code and its
      error model}. \textbf{(a)} The $6.6.6$ colour code (with
    boundary). There are data qubits on the vertices of a hexagonal
    lattice and the checks are the coloured faces.  An $X$-error on a
    bulk data qubit (the yellow star in the middle of the picture)
    triggers the three $Z$-checks shown ringed in black. However, an
    $X$-error on a qubit lying on a boundary, only triggers the two
    checks (faces) containing it (for example the yellow star on the
    middle right boundary); and if the error occurs it at a corner, it only triggers the single
    check that it is contained in (the yellow star in the bottom
    right). \textbf{(b)} The dual lattice of the colour code shown in
    \textbf{(a)}. In this figure the nodes correspond to checks, and
    the faces to data qubit errors. The three yellow
    triangles represent the data qubit errors corresponding to the
    errors shown in \textbf{(a)}.
  }
\end{figure}
The colour code is one of the most promising options for quantum error
correction with many advantages over several of the alternatives --
for example, logical operations are often easier to
implement. However, decoding has proved difficult -- in stark contrast
to the surface code. We discuss existing decoders briefly in
section~\ref{ss:other-decoders} -- for a more complete discussion of
both decoders and the advantages and disadvantages of the colour code
see our previous paper~\cite{waltersMinimumWeightDecoding2026}.

In that paper, we proved that decoding the hexagonal planar colour
code (see Figure~\ref{fig:bare-lattice} and Section~\ref{ss:cc})
is NP-hard -- that is, under the standard complexity assumption that
$\text{P}\not=\text{NP}$, there is no polynomial time algorithm for
finding the minimum weight error-set generating a given syndrome. (We
showed this is hard even in the simplest case of the code capacity
model with $X$-errors.)  However, one of the open questions asked in
that paper is whether it is possible to approximate the minimum weight
decoding in polynomial time, or whether such an approximation is
NP-hard. This is a natural question as many NP-hard problems are also
NP-hard to approximate. For example, it is NP-hard to approximate
max-3SAT which asks for the maximum number of clauses in a 3-SAT
formula that can be satisfied -- this is a case of the famous PCP
theorem~\cite{AroraSafra1998,arora1998proof} (with a tighter result
proved by
H\r{a}stad~\cite{hastadOptimalInapproximabilityResults2001}). In
contrast, for some NP-hard problems, it is possible to approximate
arbitrarily closely in polynomial time. One of the most notable
examples, and much of the inspiration for this paper, is Arora's
result~\cite{aroraPolynomialTimeApproximation1998} for the Euclidean
Travelling Salesman Problem.

In this paper we show that the latter case holds --
we give a polynomial
time algorithm approximating the minimum weight
decoding. Unfortunately, our polynomial is impractically large, but it
does show that approximating the minimum weight error-set is not
NP-hard. In other words, the
NP-completeness result
in~\cite{waltersMinimumWeightDecoding2026} does not rule out the
possibility of fairly accurate `reasonable' polynomial time
algorithms.

The precise statement of our result is given by the following theorem.

\begin{theorem}\label{t:cc-bounded}
  For any $\eps>0$ there is an algorithm $\mathbf{A_\eps}$
 such that given any syndrome $\cS$ in the distance $d$ 
  code-capacity colour code,
  \begin{itemize}
  \item
  $\mathbf{A_\eps}$ will output an error-set
  $\cE$ generating $\cS$, with $|\cE|\le (1+\eps)OPT$ where $OPT$ is
    the size of minimal error set generating $\cS$.
  \item The computational complexity (i.e. run time) of
    $\mathbf{A_\eps}$ is bounded by a polynomial in $d$.
  \end{itemize}
\end{theorem}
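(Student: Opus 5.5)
The plan is to follow Arora's strategy for Euclidean TSP: a randomly shifted hierarchical dissection, a structure theorem, and dynamic programming. Work in the dual lattice (Figure~\ref{fig:bare-lattice-dual}): checks are nodes, and an error-set $\cE$ is a set of triangles whose mod-$2$ boundary at the nodes equals $\cS$. First reduce to the case where the relevant scale is controlled. Since $|\cS|\le 3\,OPT$ and every flagged check must be covered, we have $OPT\ge |\cS|/3$. If $|\cS|$ is small compared with the $\Theta(d)$ size of a cell we can afford to work with, a cruder argument is needed. To handle this, cluster the syndrome. Any two flagged checks at lattice distance much more than $OPT$ from each other cannot be joined by a single error component. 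So we first find a crude constant-factor decoding, for example via a restriction/projection decoder or by matching within one colour class. This gives an estimate $W$ with $OPT\le W\le C\cdot OPT$, and all later choices of granularity are made relative to $W$.

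Next, impose a randomly shifted quadtree dissection of the triangular lattice into cells. The grid is refined until cells have side about $\eps W/\mathrm{poly}(\log d)$ or reach unit size; the depth is $O(\log d)$. On each cell boundary, place $m=O(\log d/\eps)$ evenly spaced portals. The key structural object is the ``charge'' crossing a cell boundary. Split any error-set along a cell boundary. The part inside a cell $Q$ generates the syndrome $\cS\cap Q$ plus a boundary syndrome on $\partial Q$. The dynamic programme is therefore indexed by (cell, a description of the induced boundary syndrome). For the colour code this boundary data is a $\mathbb{Z}_2^2$-valued ``colour charge'' pattern along the boundary, rather than the single $\mathbb{Z}_2$ parity one would have for the surface code.

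\textbf{Structure theorem (main obstacle).} We must show that, with probability at least $1/2$ over the random shift, some error-set of weight at most $(1+\eps)OPT$ is \emph{portal-respecting}. This means its boundary syndrome on each cell edge is nonzero only at portals, with at most $r=O(1/\eps)$ active portals per edge. The modification step is as follows. Wherever the optimum's error crosses a cell edge between two portals, we cut it and reroute the charge to the nearest portal along the edge. Doing so requires a local colour-code ``string'' that transports a $\mathbb{Z}_2^2$ charge a distance $\ell$ at cost $O(\ell)$; such strings exist because colour-code strings of each colour are linear in length. If an edge carries too many crossings, we apply Arora's patching lemma. We cancel the charges pairwise along the edge at cost $O(\text{edge length})$, leaving at most a constant number of net charges, since the total charge lives in $\mathbb{Z}_2^2$. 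The expected extra cost is then bounded in the usual way. A line at level $i$ is hit with probability about $2^{i}/L$, and its portal spacing is $L/(2^i m)$. Summed over $O(\log d)$ levels, the expected extra cost is $O(\log d/m)$ times the number of crossings, and the number of crossings is $O(OPT)$. The delicate points are three. First, we must verify that a colour-code patching lemma holds, which means charges can be fused and moved with linear cost even near the code's coloured boundaries. Second, rerouting must not create new syndromes, so net charge must be conserved. Third, the crossing count must really be bounded by the weight, which holds because each triangle meets only $O(1)$ lines.

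\textbf{Dynamic programming.} Finally, we run dynamic programming bottom-up over the quadtree. For each cell and each assignment of $\mathbb{Z}_2^2$ charges to at most $r$ chosen portals per side, we store the minimum weight of an error-set inside the cell realising $\cS\cap Q$ together with that boundary pattern. There are $\binom{O(m)}{O(r)}4^{O(r)}=d^{O(1/\eps)}$ patterns. Leaves have constant size, or have size $\mathrm{poly}(1/\eps)$ with few syndromes, and are solved by brute force. Combining four children means enumerating compatible patterns on internal edges, which also costs $d^{O(1/\eps)}$. In total this gives time $d^{O(1/\eps)}$, polynomial in $d$ for fixed $\eps$. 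Derandomisation is done by trying all $O(d^2)$ shifts. Rescaling $\eps$ then yields Theorem~\ref{t:cc-bounded}.
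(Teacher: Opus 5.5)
There is a genuine gap exactly where Theorem~\ref{t:cc-bounded} differs from the infinite-lattice Proposition~\ref{p:cc}: the boundary of the code. Your dissection, rerouting strings and charge bookkeeping all assume that every triangle exists and carries zero net $\mathbb{Z}_2^2$ charge. In the bounded code this breaks in three ways:
\begin{itemize}
\item cells are cut by the sides of $\cT$;
\item a qubit on a side flips only two checks and a corner qubit only one, so the sides absorb charge;
\item a cell edge running into a side may have its nearest portal, and the string you would reroute along, outside $\cT$, where there are no qubits.
\end{itemize}
You flag this as something to ``verify'' but give no mechanism, so as written your argument only proves the bulk statement.

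The paper avoids redoing the structure theorem. It passes to the \emph{reduced colour code}: the infinite lattice with only those nodes deleted that lie in boundary errors of $\cT$ but are not checks of $\cT$. Checks further out are kept, so no single error can create an isolated defect on a side. The bounded optimum is a valid error-set there, Lemma~\ref{l:mod} applies verbatim, and the dynamic programme runs with no parity constraint at deleted nodes. Its output may use errors outside $\cT$. These are removed without increasing the weight by Lemma~\ref{l:project-half-plane}: for each side, regard each triangle as an edge between its two nodes not of the omitted colour, project onto the first two rows, and reduce mod~$2$. Afterwards the only stray errors hang off omitted corner checks and can be discarded.

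In the bulk your plan is the paper's: random offset, rerouting each forbidden crossing to the nearest portal by strings on both sides of the edge, expected extra cost $O(\log d/M)$ per error with $M$ portals per side, dynamic programming, and trying all shifts. Two details still need repair.

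\begin{itemize}
\item \textbf{DP state.} A single $\mathbb{Z}_2^2$ value per portal does not determine which crossing triangles are present. A portal region contains several triangles of the same charge that put defects on different nodes. Neighbouring cells can therefore agree on charges while assuming different crossing sets, and the glued set then fails to generate $\cS$. The state must record the crossing errors themselves, as the paper does (at most nine across each side of a portal tile), which only changes constants. If you instead fix one canonical triangle per charge, you must pay an $O(1)$ correction for every non-canonical crossing. That includes the many fine-level crossings that already sit at portals, and this is not $O(\eps)\,OPT$ in general.
\item \textbf{Cross-direction interaction.} You need that rerouting across edges of one direction creates no forbidden crossings of the other. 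The paper gets this from Lemmas~\ref{l:boundary-tile} and~\ref{l:portal-higher-levels}.
\end{itemize}

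Some of your machinery is also unnecessary. The lattice is already discrete in an $O(d)$ box, so the estimate $W$ should be dropped; leaves of side $\eps W/\mathrm{poly}(\log d)$ could not be brute-forced anyway. Since a portal tile has $O(1)$ crossing configurations, $M=\Theta(\log d/\eps)$ unrestricted portals per side already give $2^{O(M)}=d^{O(1/\eps)}$ states. So neither $r$-lightness nor a patching lemma is needed; your $(m,r)$-light variant is the refinement the paper mentions for reducing the running time.
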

It is important to note that this is not the same as a heuristic or
average case decoder -- for \emph{every} syndrome, the algorithm given by
Theorem~\ref{t:cc-bounded}  approximates the minimum
weight decoding to within a factor of $1+\eps$.  We remark that the
algorithm $\mathbf{A_\eps}$ can correctly decide the logical effect of
all errors of up to weight $(1-\frac{\eps}{2})\frac{d}{2}$ -- i.e.,
nearly up to the theoretical threshold of $d/2$. This follows since,
if the sum of the weights of the true error set and the decoder's
solution error-set is less than the code distance $d$, then the
decoders solution and the true error-set must be logically equivalent
-- and, given any error-set of weight less than
$(1-\frac{\eps}{2})\frac{d}{2}$, the algorithm gives an error-set of
weight at most $(1+\eps)(1-\frac{\eps}{2})\frac{d}{2}<
(1+\frac{\eps}{2})\frac{d}{2}$ which together have sum less than $d$.

 Whilst the theorem is says that we can approximate the minimum weight
 decoding to within a multiplicative factor of $1+\eps$ in polynomial
 time, it is important to note that the algorithm and
 run-time depend on how closely we wish to approximate the minimum
 weight decoding.  Moreover, as we reduce $\eps$ the complexity of
 $\mathbf{A_\eps}$ the computational complexity increases rapidly --
 all of the degree, the leading coefficient and the constant term grow
 quickly -- for example, to approximate within $1\%$ our algorithm is
 polynomial but with degree over 100,000!  Whilst our bounds are far
 from tight -- we have focused on keeping the algorithm and proof
 simple (see Section~\ref{s:further-work} for discussion of possible
 improvements) -- it seems likely that substantial new ideas would be
 needed to make these algorithms practical.

In the bulk of this paper we restrict to showing that there is a polynomial time
algorithm for the simplest case of code-capacity with the only errors
being $X$-errors which all occur with the same probability. Our
results can be extended to errors from depolarising noise, unequal probabilities, circuit
level noise and other codes.  We discuss this in
Section~\ref{s:further-work}.

\textbf{Note.} We would like to bring the reader's attention to independent and
concurrent work by S.~Gu, L.~Wang and A.~Kubica~\cite{guPTAS2026}, which
provides a PTAS for minimum-weight decoding of topological codes with
point-like excitations connected by string-like errors.

\subsection{Comparison with previous work}\label{ss:other-decoders} 
There are many  decoders for the colour code -- these include  restriction
decoders~\cite{delfosseDecodingColorCodes2014,
  kubicaEfficientColorCode2023}, M\"obius
decoders~\cite{sahayDecoderTriangularColor2022,
  gidneyNewCircuitsOpen2023},  machine learning
decoders~\cite{seniorScalableRealtimeNeural2025}, and BP-based
decoders~\cite{koutsioumpasColourCodesReach2025}. Whilst many of these
work very well in practice, they typically do not give formal
guarantees on the approximation (note that a decoder can have an
excellent logical error rate whilst not meeting any approximation
bound -- good behaviour `on average' is sufficient). One exception, is
the restriction decoder of Delfosse and
Kubica~\cite{kubicaEfficientColorCode2023} which, as observed by Gu,
Wang and Kubica in~\cite{guColorCodeSurface2026}, does guarantee to
give a decoding which is at most three times the minimum
weight -- so it is possible to approximate within a
constant factor in `sensible' polynomial time.

\subsection{Terminology, Notation and the colour code.}\label{ss:cc}
In this section we briefly recall the (standard) terminology we use,
the notation we use, and briefly describe the (hexagonal) colour code.

The infinite colour code is an infinite hexagonal lattice, where each
vertex corresponds to a data qubit and each hexagonal face corresponds
to both an X-check and a Z-check (see
Figure~\ref{fig:bare-lattice-non-dual} for the finite colour code).
Since each data qubit is in three checks (of each type) an X-error
will trigger the three Z-checks containing it.  More generally, a
check is triggered if it meets an odd number of errors and we call
such a check a defect. The syndrome $\cS$ is the state of all the
checks. We remark that, whilst we have called this a code, the
infinite colour code does not actually encode any logical
qubits. However, the homogeneity (lack of boundary) makes it simpler
to work in, and still captures many of the key ideas, so we will start
by proving our result in this setting and then extend it to bounded
colour codes (see below).

We say that an error-set $\cE$ generates a
syndrome $\cS$ if the errors in $\cE$ would give $\cS$ as syndrome. A
minimum weight decoding of a syndrome $\cS$ is a set of errors $\cE$
that generates $\cS$, and has the minimum weight over all such sets.

We will work in the dual lattice (see Figure~\ref{fig:bare-lattice-dual}) where
each face (check) is replaced by a node, and each node (qubit) is
replaced by a (triangular) face.

The colour code with boundary is shown in
Figure~\ref{fig:bare-lattice} -- the difference is that we restrict to a
finite set of qubits (typically in the triangular shape shown), and
then we need to define how checks work at the boundary. We see that
there are boundary qubits that are in two checks, and three corner
qubits that are in a single check.

\section{Outline of Proof}
The technique we use is based on the techniques used by Arora~\cite{aroraPolynomialTimeApproximation1998}
when he showed that there exist polynomial time algorithms solving
several `Euclidean' optimisation problems including the Euclidean
Travelling Salesman Problem. His proof has several novel ideas that we
adapt to the colour code decoding problem and we will point these out as we use
them.

As discussed in the introduction (Section~\ref{ss:cc}) we work in the
dual lattice and, for most of the paper, we work in the infinite
triangular lattice rather than the bounded case. If the reader would
prefer not to consider the infinite lattice then an alternative is to
think of a syndrome of an error-set that is well away from the
boundaries of a large finite colour code (and does not need to use the
boundary to be satisfied). Our key result
is the following
proposition.
\begin{proposition}\label{p:cc}
  For any $\eps>0$ there is an algorithm $\mathbf{A_\eps}$ such that
  given any syndrome $\cS$ of diameter $d$ in the infinite
  triangular lattice,
  \begin{itemize}
  \item
  $\mathbf{A_\eps}$ will output an error-set
  $\cE$ generating $\cS$, with $|\cE|\le (1+\eps)OPT$ where $OPT$ is
    the size of minimal error-set generating $\cS$.
  \item The complexity (i.e. run time) of $\mathbf{A_\eps}$ is bounded by a polynomial in $d$.
  \end{itemize}
\end{proposition}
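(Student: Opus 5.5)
We follow Arora's scheme: random shifted quadtree dissection, portals, a structure theorem showing some near-optimal solution is portal-respecting with few crossings, then dynamic programming. We work in the dual triangular lattice. An error-set is a set $\cE$ of triangles, and the syndrome is the set of nodes lying in an odd number of triangles of $\cE$. First we reduce to a box of side $O(d)$ around $\cS$: a minimal error-set can be taken to lie within distance $O(OPT)=O(d)$ of $\cS$, since any connected component of the error-set not touching the syndrome could be deleted. We also need the standard fact that a minimum weight decoding decomposes into connected clusters whose defects can be matched locally; in particular $OPT=\Theta(\sum$ cluster diameters$)$, with $OPT\ge c\cdot$(diameter of each cluster).

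Next we take a randomly shifted quadtree of depth $L=O(\log d)$ over the box, with $m=O(\log d/\eps)$ portals on each side of each square. A cell boundary is a line of the lattice. Cutting an error-set along a boundary leaves each side with a residual ``boundary syndrome''. The key local lemma, the analogue of Arora's patching lemma, is as follows. For any segment $\sigma$ of a cell boundary, any error-set $\cE$ can be modified near $\sigma$, at additive cost $O(|\sigma|/m)$ (or $O(1)$ per crossing), so that the parity information passed across $\sigma$ is carried only at the portals and has total complexity at most a constant $r=r(\eps)$ per side. In colour-code terms, the relevant data is the pattern of triangles straddling $\sigma$. Since the colour code's excitations are coloured (three types with $\mathbb{Z}_2\times\mathbb{Z}_2$ fusion), the data crossing an edge is a sequence of colour charges. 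We patch by fusing charges between consecutive portals and routing each fused charge along $\sigma$ to the nearest portal, using a string-net (charge-transporting) error of length equal to the distance moved. We then apply a second patching that reduces more than $r$ crossings on one side to a bounded number, costing $O(\text{side length})$ and charged against the error weight near that side.

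The structure theorem then follows by Arora's averaging argument. A line at level $i$ is hit with probability $\sim 2^i/d$, and the cost of snapping to portals on a level-$i$ line is $\sim d/(2^i m)$ per crossing. Summing over the $L$ levels, the expected extra cost is $O(L/m)\cdot\#\{\text{crossings}\}$. The expected number of crossings of grid lines by $\cE^*$ is $O(|\cE^*|)$, because each triangle meets $O(1)$ lines. Hence the expected extra cost is $O(\eps)OPT$. The charging for the crossing-reduction step is handled the same way, using the fact that the grid lines cut by the error-set have total expected count $O(OPT)$.

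For the dynamic programme, each quadtree cell is indexed by the assignment of a colour charge (an element of $\mathbb{Z}_2^2$) to each portal, together with, if needed, a bounded crossing pattern. This gives $4^{O(m)}=d^{O(1/\eps)}$ states per cell. We compute the minimum-weight error-set inside the cell generating the cell's syndrome plus the specified boundary charges by combining the four children, with leaves of constant size solved by brute force. The total running time is $O(d^2)\cdot d^{O(1/\eps)}$, and derandomisation is achieved by trying all $O(d^2)$ shifts. The main obstacle is the patching lemma: unlike TSP paths, colour-code errors are not curves but $\mathbb{Z}_2$ chains of triangles with three-valent branching. We must show that cutting along a lattice line yields a well-defined charge sequence that can be rerouted to portals at cost proportional to the distance moved. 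Our first attempt will be to use the local-unitary equivalence of the colour code with two toric codes (the unfolding map), so that charges become pairs of point-like anyons on strings, where rerouting is straightforward, and then to map back with constant-factor overhead absorbed into $m$.
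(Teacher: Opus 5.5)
Your skeleton is the paper's: random offset, a nested dissection with $M=\Theta(\log L/\eps)$ portals per side, rerouting cost of order $L/(2^iM)$ for an error whose coarsest boundary level is $i$, the averaging $\sum_i (2^i/L)\cdot L/(2^iM)=O(\log L/M)$, a dynamic programme with $2^{O(M)}$ states per cell, and trying all $L^2$ offsets. The gap is that the one step with colour-code content, your patching lemma, is not proved. You leave it as the main obstacle and offer only a speculative route through the unfolding map. That lemma has to deliver two things.

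\emph{First}, a forbidden triangle $f$ must be replaced by a set $\cE_f$ with the same syndrome, crossing the boundary only at a portal, of weight $O(D)$, where $D\le L/(2^iM)$ is the distance to the nearest portal tile. No toric-code detour is needed. The paper takes the triangle $p$ of the same colour-orientation in the nearest portal tile along the side, and joins $f$ to $p$ by two strings of triangles hugging the side, one in each rhombus. Each string uses four triangles per three units, and the reflections of $f$ and $p$ are added. Then $\cE_f\cup\{f\}$ is syndrome-free and $|\cE_f|=\tfrac{8}{3}D+3\le\tfrac{11}{3}D$ (Lemma~\ref{l:fix-single-error}).

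\emph{Second}, none of the added triangles may be forbidden at any level or in the other direction. Otherwise fixing one crossing creates new ones and the error-by-error accounting breaks; your proposal does not address this at all. The paper secures it in three ways:
\begin{itemize}
\item it makes corner tiles portals and takes portal indices divisible by $2^k/M$, so portals are nested across levels (Lemma~\ref{l:portal-higher-levels});
\item it shows that a triangle lying in tiles adjacent to a side $s$ of $R$ can straddle a perpendicular sub-rhombus boundary only inside corner tiles of that sub-rhombus (Lemma~\ref{l:boundary-tile});
\item it takes $i$ minimal, so $p$ is not a boundary error at any coarser level.
\end{itemize}
With ``$m$ portals on each side'' at unspecified positions, you have neither nesting nor corner portals guaranteed.

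Several other steps should be dropped or corrected.
\begin{itemize}
\item The Arora-style crossing reduction is unnecessary. A portal tile admits at most nine straddling triangles per side, so the DP state can simply be the subset of the at most $36(M+1)$ portal errors present, which is exact. Your cost justification for that step (``charged against the error weight near that side'') is also not the right accounting.
\item A bare $\mathbb{Z}_2^2$ charge per portal is not a sufficient DP state. Neighbouring cells must agree on the actual straddling triangles, not just their fused charge, and reconciling two configurations with the same charge costs $\Theta(1)$ per crossing. A given triangle is a boundary error at $O(1)$ levels in expectation. So any $\Theta(1)$-per-crossing cost, including your parenthetical ``$O(1)$ per crossing'', only yields a bound $(1+O(1))OPT$, not $(1+\eps)OPT$.
\item $OPT=O(d)$ is false in general. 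A syndrome of diameter $d$ can have $\Theta(d^2)$ defects, forcing $OPT=\Omega(d^2)$. Your component-deletion argument confines a minimum decoding to within $O(OPT)$ of $\cS$, which is polynomial only after a separate polynomial bound on $OPT$. The paper instead confines it to the $L\times L$ box with $L<4d$ by reflecting across the sides of a hexagon of radius $d$ (Lemma~\ref{l:rhombus-hexagon}).
\end{itemize}
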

Once we have proved Proposition~\ref{p:cc} it is relatively easy to
deduce Theorem~\ref{t:cc-bounded} which we do in Section~\ref{s:bounded_size}.

There are two key ideas for the proof of  Proposition~\ref{p:cc} (both inspired by Arora's proof).
\begin{itemize}
\item
  We set up a constrained problem -- roughly we forbid the error-set
  from containing certain errors -- and we give a (polynomial time) algorithm which
  finds the minimum weight decoding subject to these extra constraints.
\item
  We show that  any solution to the original (unconstrained)
  problem can be modified to be a solution to the constrained problem,
  without increasing its weight too much.
\end{itemize}
If we can do both of these, then we will have a polynomial time
algorithm that can find a good approximate solution -- the minimum
weight error-set for the constrained problem is exactly such a set.

\begin{figure}[t]
  \begin{subfigure}[t]{0.5\columnwidth}
  \begin{center}
    \includegraphics[width=0.95\columnwidth]{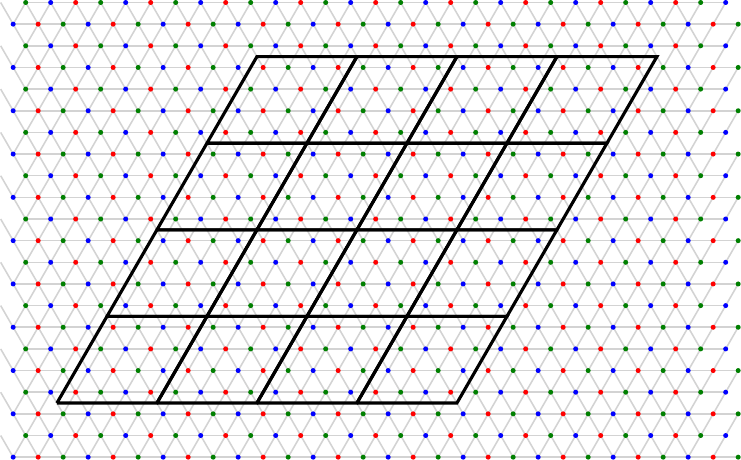}
    \caption{\label{fig:sixteen_tiles}}
  \end{center}
  \end{subfigure}
  \begin{subfigure}[t]{0.5\columnwidth}
    \begin{center}
      \includegraphics[width=0.625\columnwidth]{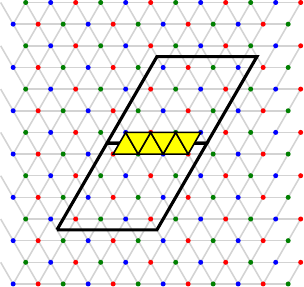}
      \caption{\label{fig:two_adjacent_tiles}}
    \end{center}
  \end{subfigure}
  \caption{\label{fig:sixteen_tiles_both}\textbf{(a)} A rhombus
    consisting of sixteen tiles. \textbf{(b)} Two adjacent tiles with
    six errors joining them: one of each colour-orientation.}
\end{figure}

To define the constrained problem we need to specify the forbidden
errors. To do this, we tile the dual lattice with $4\times 4$
rhombuses as shown in Figure~\ref{fig:sixteen_tiles}. This choice
helps simplify the proof
as it means that any two adjacent tiles have six errors joining them
-- see Figure~\ref{fig:two_adjacent_tiles} -- one of each of the six
possible colour-orientations (triangular faces can point up or down,
and there are three cyclic permutations of the colours for each of
these). When we wish to state the position of any rhombus we will
state its bottom left node, and we will use coordinates (basis) the
horizontal axis, and an axis 60 degrees up and to the right. We will
call the second of these the \emph{diagonal axis}, and we will refer
to the sides of the rhombus parallel to this axis as \emph{diagonal
sides}.

Now, given a syndrome $\cS$ of diameter $d$ pick $L=2^\ell$ such that
$2d\le L<4d$ -- so $\ell=\lfloor\log d\rfloor+1$ (all logs in this paper
are base 2).  By translation we may assume that the syndrome $\cS$ is
contained in the $L\times L$ rhombus with its bottom left corner at
the origin (for example place one point of the syndrome at the point
with coordinates $(d,d)$). We call the $L\times L$ rhombus based at
the origin the \emph{syndrome box} (so the syndrome is entirely
contained in the syndrome box).

We place a $2L\times 2L$ rhombus $R_0$ based at some point with
coordinates in $[-L]\times[-L]$, where
$[-L]=\{-1,-2,\dots,-L\}$.%
This choice means $R_0$ always contains the
syndrome box and thus the whole of the syndrome $\cS$.  We call the
base point of $R_0$ its \emph{offset}. Later we will choose the offset
randomly from inside $[-L]\times[-L]$.

We want to reduce the full problem to smaller sub-problems so that we
can recurse or use dynamic programming. Hence,  we split $R_0$ into four
rhombuses of side length $L$, and then each of these into four
sub-rhombuses of side length $L/2$ repeating until we get to single
tiles. We say that a rhombus of side length $L/2^i$ is at level $i$;
this means tiles (recall these have side length 4) are at level
$\ell-2$.

This definition might seem strange as $R_0$ is at level $-1$; however,
we will be interested in errors that cross the boundary of a rhombus,
and no error of the minimum weight error-set will cross the boundary
of $R_0$. Whilst this is intuitively true, for completeness we include
a proof that there is a minimum weight decoding of $\cS$ which is
entirely contained inside the syndrome box (so definitely inside
$R_0$) in the appendix (Lemma~\ref{l:rhombus-hexagon}) We call the
whole family of rhombuses of all sizes the rhombus-tree $\cR$.

We will need to know how the sub-problems interact with each other,
and this interaction occurs along the boundaries of the rhombuses.  We
say that an error is a \emph{boundary error of level $i$} if it
contains nodes in (at least) two distinct rhombuses at level $i$. We
will say it is a \emph{corner error of level $i$} if it contains nodes
in three distinct rhombuses at level $i$.

Observe that, if all the boundary errors of a rhombus $R$ are
specified (i.e., it is specified whether they occur or not), then we
can solve for the minimum weight error-set inside $R$ without
considering the errors or syndrome outside $R$ -- this follows
immediately because, once the boundary errors are fixed, the
checks inside $R$ are
separated from the checks outside.
Therefore, one algorithm for solving for the minimum weight
error-set inside $R_0$ is to split $R_0$ into its four child
sub-rhombuses (in the rhombus-tree), and find the optimal solution for
each of these sub-rhombuses for each of its possible boundary
configurations, and then choose the optimum over all compatible
boundaries (i.e., rhombuses that share a boundary have to agree on the
boundary). However, the number of possible boundary configurations is
exponential in the side length of $R_0$, so it is not possible to try
all of them (with a polynomial time algorithm). To get around this we
constrain which boundary errors can occur. This is very similar to
Arora's idea of `portals' in his proof and we have followed his naming
here.

We fix a parameter $M$, a power of two, which we use to control how
much we constrain the boundary errors. When we actually use $M$ it
will be of order $\log L$ but, for intuition, the reader may think of
it as a large constant for now.  Given a rhombus $R$ we define the
portal tiles of $R$ to be the four corner tiles of $R$ together with
$M-1$ tiles uniformly spaced on each of its sides (this is a little
imprecise -- we formalise the definition below). We call any level-$i$
boundary error which has all of its checks in level-$i$ portal
tiles  a \emph{level-$i$ portal error}. All other level
$i$ boundary errors are called \emph{level-$i$ forbidden errors}.

If the rhombus $R$ has less than $M$ tiles along each side, then all
the tiles along the sides are portal tiles. In this case it is
important to note that \emph{all} boundary errors are portal errors;
i.e., there are \emph{no} forbidden errors at this level. Also, note
that, since the corner tiles of any rhombus are portal tiles, all
level-$i$ corner errors are level-$i$ portal errors.

We say that an error set that does not contain any forbidden errors at
any level is \emph{portal respecting}. Note this depends on $M$ (since
the definition of portal depends on $M$), and when we wish to
emphasise this we will write \emph{$M$-portal respecting}. Also note that it
depends on the offset used to form $R_0$ since that changes the entire
tiling and rhombus tree $\cR$.

The idea of this constraint is that it limits the number of possible
boundary configurations of a rhombus to be $2^{O(M)}$ and, provided
that $M=O(\log L)$, this is polynomial in $L$.  This will allow to
solve the constrained problem recursively using dynamic programming in
polynomial time.

The following two lemmas will prove Proposition~\ref{p:cc}.

\begin{lemma}\label{l:alg}
  Given any $R_0$ and any $M$.  There is an algorithm $\mathbf{A}$
  that, for any syndrome $\cS$, finds a minimum weight
  \defnemph{$M$-portal-respecting} error-set generating $\cS$, and
  takes time at most $L^22^{72(M+1)}$.%
\end{lemma}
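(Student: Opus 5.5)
We use bottom-up dynamic programming over the rhombus-tree $\cR$, in the same way as Arora's portal DP. For each rhombus $R\in\cR$ we define the \emph{boundary interface} of $R$: the set of errors that are level-$i$ boundary errors of the level $i$ of $R$, meet a node of $R$, and are portal errors. The constraint that makes this work is portal respecting: every boundary error at a level must be a portal error, and forbidden errors are never used. So the only errors crossing the boundary of $R$ that can appear in a portal-respecting set are portal errors. These have all their nodes in portal tiles of $R$ and of its neighbours at that level.

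We first count the interface. $R$ has $4$ corner tiles and $4(M-1)$ side portal tiles, so at most $4M$ portal tiles, and in a portal-respecting set each of them has a bounded number of candidate crossing errors. Across a side, two adjacent tiles share six errors (Figure~\ref{fig:two_adjacent_tiles}), and corner tiles add a bounded number more. A direct count gives at most $72(M+1)$ candidate boundary errors for $R$. So there are at most $2^{72(M+1)}$ boundary configurations $\beta$. I would do this count carefully, since the constant $72$ comes from it. A rough breakdown is $6$ errors per adjacent portal pair on each side, with allowance for errors at the corners.

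The table entry $T[R,\beta]$ is the minimum number of errors that are internal to $R$, meaning not boundary errors at the level of $R$, such that together with $\beta$ they produce exactly the syndrome $\cS$ restricted to the checks of $R$. The restriction is that these internal errors must be portal respecting at all deeper levels. If no such set exists, $T[R,\beta]=\infty$. By the observation above, once $\beta$ is fixed the checks inside $R$ are decoupled from everything outside, so the entry is well defined.

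\textbf{Base case (tiles).} A tile has $16$ nodes and a constant number of internal faces. We brute force over all subsets of internal errors and check parity at every node, using $\beta$ for the crossing errors.

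\textbf{Recursive step.} For $R$ with children $R_1,\dots,R_4$, we enumerate the configuration $\gamma$ of the child-level boundary errors that lie inside $R$. By portal respecting these are portal errors on the internal cross of $R$, so their number is again $O(M)$. Then $\beta\cup\gamma$ determines each child's interface $\beta_j$, and
\[
T[R,\beta]=\min_{\gamma}\Bigl(|\gamma|+\sum_{j=1}^4 T[R_j,\beta_j]\Bigr),
\]
where the minimum is over $\gamma$ consistent with $\beta$: errors lying on both the outer boundary and the internal cross are fixed by $\beta$. Correctness is an exchange argument. Any portal-respecting set splits uniquely into $\gamma$ plus the errors internal to each child, and these pieces interact only through $\beta_j$.

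The answer is $T[R_0,\emptyset]$, because no error crosses the boundary of $R_0$. We recover the error-set itself by storing argmins.

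\textbf{Running time.} The tree has $O((L/4)^2)\le L^2$ nodes in total, since the number of rhombuses is dominated geometrically by the number of tiles. Each node has at most $2^{72(M+1)}$ entries. The step I expect to be the main obstacle is fitting the inner enumeration over $\gamma$ into the stated budget rather than spending an extra $2^{O(M)}$ factor. I would first try to bound the combined number of outer-boundary and internal-cross portal errors of $R$ by $72(M+1)$, and then enumerate $(\beta,\gamma)$ jointly. Each joint choice takes $O(1)$ table lookups, giving total time $L^2 2^{72(M+1)}$ up to absorbing constants into the exponent's slack. If that bound fails, the fallback is to charge the joint enumeration to the parent node, which yields the same shape of bound with the constant adjusted.
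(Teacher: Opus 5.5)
Your dynamic program is the same as the paper's. It uses tables indexed by (rhombus, portal-respecting boundary configuration), brute force on single tiles, and, for larger rhombuses, a minimum over portal-respecting configurations of the interior cross that combines cached child values.

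The gap is the count that produces the constant $72$. You assert it rather than carry it out, and you attribute it to the wrong place.

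The interface $\beta$ of $R$ has at most $36(M+1)$ candidate errors. Each side has at most $M+1$ portal tiles, and at most nine boundary errors meet a side of a portal tile (Lemma~\ref{l:num-boundary}). The figure is nine rather than your six because consecutive portal tiles, which occur when all boundary tiles are portals, allow extra errors meeting two tiles on one side.

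Separately, the child portal errors that are not boundary errors of $R$ lie along the four interior half-sides. Each half-side has at most $M+1$ portal tiles, so there are at most $36(M+1)$ of them (Lemma~\ref{l:num-interior-boundary}).

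The $72$ is therefore the sum $36+36$. Enumerating $(\beta,\gamma)$ jointly costs $2^{36(M+1)}\cdot 2^{36(M+1)}=2^{72(M+1)}$ per rhombus. There are at most $L^2/3<L^2$ rhombuses, because $R_0$ has side $2L$ and so contains $L^2/4$ tiles.

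If the interface really needed $72(M+1)$ errors, as your write-up suggests, there would be no room left for $\gamma$ inside a $2^{72(M+1)}$ budget. Your fallback would then only give a larger constant, which does not prove the lemma as stated. With the $36+36$ split, your primary joint-enumeration plan goes through exactly as in the paper and no fallback is needed.
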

\begin{lemma}\label{l:mod}
  Given any syndrome $\cS$, any $M$, and any error-set $\cE$
  generating $\cS$, there  exists an offset such that $R_0$ has
  an $M$-portal respecting error-set $\cE'$ generating $\cS$ with
  $|\cE'|\le (1+ \frac{22\log L}{3M})|\cE|$.
\end{lemma}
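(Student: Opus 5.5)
The plan is Arora's ``random shift plus local patching'' argument. Choose the offset uniformly at random from $[-L]\times[-L]$. Repair each forbidden error of $\cE$ deterministically, by a local patch whose cost depends only on the level at which the error is forbidden. Then bound the expected total patching cost by $\frac{22\log L}{3M}|\cE|$. Some offset then does at least as well as the average, which gives Lemma~\ref{l:mod}.

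\emph{Step 1 (local patch).} Let $e\in\cE$ be a level-$i$ forbidden error, with $i$ the smallest such level. Then $e$ straddles a level-$i$ rhombus side, and on that side the portal tiles are spaced $s_i=L/(2^iM)$ apart (in lattice units, up to the factor for the tile width). By the choice of $4\times4$ tiles there is a portal error $e'$ of the same colour-orientation as $e$ in the nearest portal tile along the same side, at distance at most about $s_i/2$.

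Replace $e$ by $e'$ together with a set $P_e$ of errors that fixes the syndrome. The syndrome difference between $e$ and $e'$ is a set of at most six defects: the checks of $e$ and of $e'$, paired by colour. In the colour code a defect of a given colour can be moved a distance $k$ along a straight line by a string of $O(k)$ errors. So $P_e$ consists of three such strings running parallel to the side, inside the row of tiles adjacent to it, and $|P_e|\le c\,s_i$ for an explicit constant $c$. We take the new error-set to be the symmetric difference $\cE\,\triangle\,\{e,e'\}\,\triangle\,P_e$. Symmetric differences only cancel errors, so the weight bound still holds.

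\emph{Step 2 (the patches create no new forbidden errors).} The patch lies along a level-$i$ line, and every level-$i$ line is also a level-$j$ line for all $j>i$. The string in $P_e$ crosses finer level-$j$ boundaries only where those boundaries meet this line, that is, at level-$j$ corner tiles. Corner tiles are always portal tiles, so these crossings are portal errors. We must also route the strings so that they do not cross the level-$i$ side itself, and so that they create nothing forbidden at coarser levels $j<i$. For the latter, note that the patch stays within distance $s_i$ of $e$, inside a single level-$(i-1)$ rhombus side segment.

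This is the step I expect to be the main obstacle. It needs a careful explicit construction of the defect-moving strings in the triangular lattice, and it must handle errors forbidden at several levels, or lying near a rhombus corner. To avoid interactions between patches, I would process the levels from coarsest to finest.

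\emph{Step 3 (averaging).} For a fixed error $e$ and a random offset, $e$ is a level-$i$ boundary error only if one of the $O(2^i)$ level-$i$ lines in each of the three relevant directions passes through $e$. This happens with probability at most $a\,2^i/L$ for an explicit constant $a$. The expected patching cost charged to $e$ is therefore at most
\[
\sum_{i=0}^{\ell-2} \frac{a\,2^i}{L}\cdot \frac{c\,L}{2^iM}\;\le\;\frac{ac\,\ell}{M}.
\]
Summing over $e\in\cE$ gives expected total cost at most $\frac{ac\log L}{M}|\cE|$, using $\ell\approx\log L$. Tracking the constants in the tile geometry should give $ac=22/3$. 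Finally, there is an offset at which the cost is at most its expectation, and this offset yields the required $\cE'$.
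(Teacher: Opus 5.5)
Your architecture is the paper's: a random offset, one local patch per forbidden error at the minimal level $i$ at which it is a boundary error, cost $O(L/(2^iM))$ weighed against probability $O(2^i/L)$, then averaging. However, Step~2 is left open, and one necessary ingredient is missing from it.

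The portal error $e'$ straddles $s$. Since $s$ is also a side of every finer rhombus of $R$ and $R'$ along it, $e'$ is a boundary error at \emph{every} level $j>i$. Your corner-tile observation (which is Lemma~\ref{l:boundary-tile}) handles the strings' diagonal crossings but says nothing about $e'$. What is needed is that portal tiles are nested (Lemma~\ref{l:portal-higher-levels}). Level-$i$ portal indices are multiples of $2^k/M$, and a sub-rhombus with $2^{k'}$ tiles per side starts at a multiple of $2^{k'}$. So the index relative to the sub-rhombus is a multiple of $2^{k'}/M$, and a level-$i$ portal tile stays a portal tile at every finer level.

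For coarser levels the relevant fact is not proximity to $e$ but minimality of $i$. Because $e$ is not a level-$(i-1)$ boundary error, $R$ and $R'$ are siblings. Your strings lie in $R$ or $R'$ and $e'$ lies in $R\cup R'$, so nothing in the patch is a boundary error at any level $j<i$.

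No coarse-to-fine ordering is needed either. For a fixed offset, being forbidden is a property of a single error, so $(\cE\setminus\cF)\oplus\bigoplus_{f\in\cF}\cE_f$ is portal respecting as soon as each $\cE_f$ is.

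The constant $22/3$ must also be derived, not read off the statement. Rhombus sides run in only two directions, so $a=2$, not $3$. Your three independent single-colour strings cost $\frac43D$ errors each, so your patch has about $4D+1\le\frac{13}{3}D$ errors (using $D\ge 3$). With the spacing bound $D\le L/(2^iM)$ this gives $ac=26/3>22/3$.

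To reach the stated bound you need one of two fixes:
\begin{itemize}
\item The sharper nearest-portal estimate: half the spacing, plus an additive shift of at most $2$ inside the tile, since colours have period $3$ while tiles have width $4$. This gives $13/2$.
\item The paper's device: also add the reflections of $f$ and of $p$ across their horizontal sides. This leaves only four defects, all of one colour, two on each side of $s$. Two strings join them, for a total of $\frac83D+3\le\frac{11}{3}D$ errors.
\end{itemize}
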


\begin{proof}[Proof of Proposition~\ref{p:cc} assuming Lemmas~\ref{l:alg} and~\ref{l:mod}]
  
Fix a constant $C$ to be chosen later (in fact $C=22/3$ will do), and
then fix $M$ a power of two with $\frac{C}{\eps}\log L\le M <
\frac{2C}{\eps}\log L$.  Then iterate over all $L^2$ possible
offsets, using the algorithm in Lemma~\ref{l:alg} to find the minimum
weight decoding for that offset. This takes time
  \[
  O(L^2L^22^{72(M+1)})=O(L^42^{(72((2C/\eps)\log L)})=  O(L^{4+144C/\eps})
  \]
  which, for any fixed $\eps>0$, is polynomial in $L$ and, since $L\le
  4d$, polynomial in $d$.

Let $\cE_1$ be the smallest error-set that we find during this process
-- i.e., the error-set for the offset which has the minimum weight. We
claim that $\cE_1$ satisfies the bounds in Proposition~\ref{p:cc}. To
see this, let $\cE$ be the minimum weight error set for the
unconstrained problem. By Lemma~\ref{l:mod} we can find a portal
respecting (for some offset) error-set $\cE_2$ with $|\cE_2|\le (1+
\frac{22\log L}{3M})|\cE|$. Let $\cE_3$ be the minimal error set for
$\cE_2$'s offset so $|\cE_3|\le |\cE_2|$. Also, since $\cE_3$ is found by the dynamic program
when it runs for this offset we have $|\cE_1|\le|\cE_3|$.

Putting these together we get:
\[|\cE_1|\le |\cE_3|\le |\cE_2| \le \left(1+ \frac{22\log
  L}{3M}\right)|\cE| \le \left(1+ \frac{22\eps}{3C}\right)|\cE|=\left(1+ \frac{22\eps}{3C}\right)OPT
\]
which, provided we choose $C$ to be at least $22/3$, is at most
$(1+\eps)OPT$. This completes the proof of Proposition~\ref{p:cc}.
\end{proof}
We remark that this value of $C$ makes the complexity above
$O(L^{4+1056/\eps})$. Of course there are many places this bound could
be improved but, regardless, it is impractically enormous.

In the next section we make the definition of portal precise and prove
some elementary lemmas about them; then in Section~\ref{s:alg} we
prove Lemma~\ref{l:alg}, and in Section~\ref{s:mod} we prove
Lemma~\ref{l:mod}. In Section~\ref{s:bounded_size} we use these
results to prove Theorem~\ref{t:cc-bounded}. Finally in
Section~\ref{s:further-work} we discuss the result, possible
extensions, and its ramifications.

\section{Portals}
In this section we formalise exactly where the portals are, and prove
some elementary results about them. We start by defining boundary
tiles and corner tiles. For a rhombus $R$, a \emph{boundary tile} is
any tile inside $R$ that is adjacent to (at least) one side of $R$. A
\emph{corner tile} of $R$ is any tile of $R$ that is adjacent to two
sides of $R$.
\begin{defn}
  Given a rhombus $R$ of side length $4\times 2^k$ (so $2^k$ tiles on
  each side) then we define the \defnemph{portal tiles} as follows. If
  $2^k\le M$ then all boundary tiles are portal tiles. On the other
  hand, if $2^k> M$, enumerate the tiles along a side from $0$ to
  $2^k-1$, always starting with zero at the bottom left end of the
  side. The portal tiles are all the tiles that have index divisible
  by $2^k/M$ (recall $M$ is a power of two) -- equivalently they are
  the tiles whose index is congruent to zero modulo $2^k/M$, together
  with all four corner tiles of $R$.
\end{defn}
\begin{lemma}\label{l:portal-higher-levels}
  Suppose that $R$ is a rhombus and $P$ is a portal tile and that $R'$
  is a sub-rhombus containing $P$. Then $P$ is also a portal tile of
  $R'$.
\end{lemma}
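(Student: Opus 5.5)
We first note that a portal tile of $R$ is by definition a boundary tile of $R$. So if $P$ lies in a sub-rhombus $R'$ (a descendant of $R$ in the rhombus-tree), then $P$ lies along a side of $R$. The sub-rhombus $R'$ is obtained from $R$ by repeated quartering, so its sides are segments of the grid lines of $R$'s subdivision. Any side of $R$ that meets $R'$ contains a side of $R'$, namely the part of that side of $R$ lying in $R'$. Hence $P$ is a boundary tile of $R'$, adjacent to the corresponding side of $R'$. If $R'$ has side at most $M$ tiles, every boundary tile of $R'$ is a portal tile and we are done. So assume $R'$ has side $2^{k'}>M$ tiles, and $R$ has $2^k\ge 2^{k'}$ tiles per side. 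By induction on the number of levels it would suffice to treat $R'$ a child of $R$, but the direct argument works equally well, so we argue directly.

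Now we treat the two ways $P$ can be a portal tile of $R$. If $P$ is a corner tile of $R$, then $P$ is adjacent to two sides of $R$. These two sides both meet $R'$ and contain two sides of $R'$, so $P$ is adjacent to two sides of $R'$, i.e.\ it is a corner tile of $R'$ and hence a portal tile. Otherwise, $P$ lies on some side $s$ of $R$ with index $j\equiv 0 \pmod{2^k/M}$ in the enumeration starting from the bottom-left end of $s$. The corresponding side $s'$ of $R'$ is a contiguous sub-segment of $s$ consisting of $2^{k'}$ tiles. Since $R'$ is obtained by dyadic subdivision, $s'$ starts at index $a\cdot 2^{k'}$ for some integer $a$. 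Both enumerations run in the same direction, because "bottom left end" is preserved: sides of $R'$ are parallel to those of $R$. So $P$ has index $j'=j-a2^{k'}$ in $s'$.

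Finally, $2^k/M$ is divisible by $2^{k'}/M$, since both are powers of two with $k\ge k'$, and $a2^{k'}$ is also divisible by $2^{k'}/M$. Hence $j'\equiv 0\pmod{2^{k'}/M}$ and $P$ is a portal tile of $R'$.

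The main point needing care is the geometric claim in the first paragraph: that the dyadic structure aligns the indexing, so that $s'$ begins at a multiple of $2^{k'}$ with the same orientation. The only subtlety is a tile that is adjacent to a side of $R$ but sits at a corner of $R'$ where two sides of $R'$ meet. Such a tile is a corner tile of $R'$, and so it is a portal tile regardless. I expect this alignment and orientation check to be the only real obstacle; the rest is the divisibility computation above.
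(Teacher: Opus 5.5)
Your proof is correct and follows essentially the same route as the paper. You dispose of the corner-tile and small-$R'$ cases, then note that the first tile of $R'$ on the relevant side has index a multiple of $2^{k'}$ in $R$, so shifting the index preserves divisibility by $2^{k'}/M$; your explicit check that $P$ is a boundary tile of $R'$ is a small point the paper leaves implicit.
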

\begin{proof}
  If $P$ is a corner tile of $R$ then it is a
  corner tile of $R'$ and we are done. Also if $R'$ has at most $M$
  tiles on each side, then all of its boundary tiles are portal tiles
  and again the result holds.

  Otherwise, suppose that $R$ has $2^i$ tiles on each side, and that
  $R'$ has $2^j>M$ tiles on each side. Let $v$ be the index in $R$ of
  the first tile of $R'$.  By the definition of the rhombus-tree we
  see that $v$ is a multiple of $2^j$. Let $N_R$ be the index of $P$
  inside $R$, and $N_{R'}$ the index of $P$ inside $R'$. We see that
  $N_{R'}=N_R-v$. Since $N_R$ is congruent to zero modulo $2^i/M$, it
  is also congruent to zero modulo $2^j/M$. Also  $v$ is divisible
  by $2^j$, so is congruent to zero mod $2^j/M$. Combining these, we see that
  $N_{R'}$ is congruent to zero modulo $2^j/M$: i.e., $P$ is a
  portal tile for $R'$ as claimed.
\end{proof}
\begin{lemma}\label{l:num-boundary}
  Let $R$ be a rhombus at level $i$ in the
  rhombus-tree. Then the number of portal errors for $R$ is at most $36(M+1)$
\end{lemma}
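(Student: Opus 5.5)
We bound the number of portal errors of $R$ by counting portal tiles on the boundary of $R$ and the errors each can contribute. A portal error for $R$ is a level-$i$ boundary error crossing the boundary of $R$ with all of its checks in level-$i$ portal tiles. In particular it meets a portal tile of $R$ adjacent to a side of $R$, together with a portal tile of a neighbouring level-$i$ rhombus. So it suffices to bound three quantities: the number of portal tiles along the boundary of $R$, the number of boundary errors through each such tile, and the amount of double counting.

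\textbf{Counting portal tiles.} By the definition, if $R$ has $2^k\le M$ tiles per side, each side has at most $M$ boundary tiles. If $2^k>M$, each side has exactly $M$ tiles with index divisible by $2^k/M$, namely indices $0,2^k/M,\dots,(M-1)2^k/M$, plus possibly the far corner tile. Either way each of the four sides carries at most $M+1$ portal tiles, and corner tiles are shared between two sides. Hence $R$ has at most $4(M+1)$ portal tiles. The same bound applies to each of the (up to eight) neighbouring rhombuses, which is what matters for the other end of an error.

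\textbf{Errors per portal tile.} Consider a boundary tile $P$ of $R$. A triangular face (error) crossing the boundary of $R$ with a node in $P$ must have its other nodes in tiles adjacent to $P$ across the side, or across the corner if $P$ is a corner tile. By Figure~\ref{fig:two_adjacent_tiles}, two tiles adjacent across a straight side are joined by exactly six errors. A tile has at most two neighbours across a given side: the straight neighbour and a diagonal one, since the lattice rhombi are sheared. There are also the extra neighbours at corners. Being generous, I would bound the number of outside tiles touched by errors from $P$ by a small constant $c$, so that $P$ lies in at most $6c$ crossing errors. Combining this with the $4(M+1)$ tiles gives $24c(M+1)$, so $c\le 3/2$ on average is what is needed to reach $36(M+1)$. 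A cleaner route is to count per side. Each side of $R$ has at most $M+1$ portal tiles, and along a side, errors crossing it come in a periodic strip pattern. Each tile of side length 4 along the boundary meets at most $6+3=9$ crossing errors: the six faces straddling its shared edge, plus a few at its end vertex shared with the next tile along. Then $4$ sides $\times (M+1)$ tiles $\times 9$ errors $=36(M+1)$, and corner errors are already included since they touch corner portal tiles.

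\textbf{Main obstacle.} The hardest step is the local geometric count: verifying that each boundary tile, including corner tiles where three or four rhombuses meet, is incident to at most nine errors crossing a given side. This needs care with the triangles that cross at the shared endpoint of two consecutive tiles along a side, so that they are assigned to only one tile. I would handle it by assigning each crossing error to the lowest-index tile it meets along that side. I would then check directly from the $4\times4$ rhombus picture that each tile receives the six errors across its edge plus at most three at its endpoint vertex. Some errors are counted for two sides, but that only overcounts, so the stated upper bound still holds.
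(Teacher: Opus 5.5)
Your final ``cleaner route'' is correct and is essentially the paper's proof: at most $M+1$ portal tiles on each side of $R$, and at most nine boundary errors meeting that side of each portal tile. These are the six errors straddling the shared edge plus at most three more near the tile's two end vertices, which a direct enumeration confirms. Four sides then give $36(M+1)$. The preliminary per-tile constant-$c$ route and the lowest-index assignment are unnecessary, since overcounting is harmless for an upper bound.
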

\begin{proof}There are at most $M+1$ portal tiles on each side of $R$.
  We would like to say that each portal tile has six errors linking it
  to the corresponding tile in the neighbouring rhombus. However,
  whilst this is the typical case, if two portal tiles on the side are
  neighbours there are more possibilities (e.g., the error could
  contain one defect in each of these two tiles, and one in the tile
  in the neighbouring rhombus). However it is easy to see
  (Figure~\ref{fig:two_adjacent_tiles}) that there are at most nine
  boundary errors meeting any side of a portal tile which, counting
  the four sides of the rhombus, gives the bound of $36(M+1)$.
\end{proof}
\begin{lemma}\label{l:num-interior-boundary}
  Let $R$ be a rhombus at level $i$ in the rhombus-tree, and let
  $R_1,R_2,R_3,R_4$ be its four sub-rhombus children in the
  rhombus-tree. Let $\cP$ be the set of errors that are portal errors
  for any of the sub-rhombuses but are not boundary errors for
  $R$. Then $\cP$ has size at most $36(M+1)$
\end{lemma}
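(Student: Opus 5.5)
We count the errors in $\cP$ by locating where they sit geometrically, following the counting in Lemma~\ref{l:num-boundary}. Any error in $\cP$ is a boundary error of some child $R_j$ but not a boundary error of $R$. So all its nodes lie inside $R$, and they meet at least two distinct children. The union of the children's boundaries, minus the boundary of $R$, is the interior ``cross'': two segments through the centre of $R$. One is parallel to the horizontal axis and one to the diagonal axis. Each segment consists of two half-segments, and each half-segment is a common side of two adjacent children. So the errors in $\cP$ are exactly the portal errors of the children that straddle one of these four interior half-sides. Those lying near the centre straddle three or four children.

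The key step is to bound the number of portal tiles along each interior half-side. Each child has side length half that of $R$, so it has at most $M+1$ portal tiles along each of its sides, counting its two corner tiles. Take an interior half-side shared by two children $R_a$ and $R_b$. An error in $\cP$ crossing it has all its nodes in portal tiles of the children it meets. In particular it has a node in a portal tile of $R_a$ adjacent to that half-side. As in the proof of Lemma~\ref{l:num-boundary}, at most nine boundary errors meet any given side of a portal tile. One side of the shared half-side contributes at most $M+1$ portal tiles, so at most $9(M+1)$ errors cross this half-side. I will charge each error to a portal tile on one fixed side, say the tile of $R_a$, to avoid double counting. Summing over the four interior half-sides gives at most $4\cdot 9(M+1)=36(M+1)$.

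The main obstacle is making sure no error is missed or overcounted. The delicate cases are the corner errors at the centre of $R$, which meet three children, and errors whose nodes use neighbouring portal tiles. I would handle these as in Lemma~\ref{l:num-boundary}. Any error meeting at least two children crosses at least one interior half-side, so it is counted at least once, and counting it more than once only helps the upper bound. The nine-per-tile-side bound from Figure~\ref{fig:two_adjacent_tiles} already includes the configurations with adjacent portal tiles, including the central corner tiles. The portal tiles at the ends of each half-side are corner tiles of the children, and they are among the $M+1$ counted. If the children have at most $M$ tiles per side, then every boundary tile is a portal tile, with at most $2^k\le M$ tiles per side, so the same bound holds.
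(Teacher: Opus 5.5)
Your proposal is correct and follows essentially the same argument as the paper. Both restrict attention to the four interior half-sides and use at most $M+1$ portal tiles per half-side, with at most nine errors meeting any side of a portal tile, giving $4\cdot 9(M+1)=36(M+1)$. Your charging to one fixed side and your check that the central corner errors cross some half-side simply make explicit what the paper's short proof leaves implicit.
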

\begin{proof}
  This is very similar to the previous lemma. Since $\cP$ does not
  include the boundary errors of $R$ we only need to consider the four
  `interior' sides joining the sub-rhombuses. Each of these has
  (at most) $M+1$ portal tiles. Therefore $\cP$ has size at most  $36(M+1)$.
\end{proof}
We remark that, although the bounds in the previous lemmas are
identical, they are bounding different things, and we will need both
bounds later.

The next lemma shows that certain errors are not forbidden errors --
later, when we modify a set to remove forbidden errors, we will only
use these errors and this will ensure that, when removing one
forbidden error, we do not add any new forbidden errors.
\begin{lemma}\label{l:boundary-tile}
  Suppose that $R$ is a rhombus and that $f$ is a single error wholly
  contained within tiles inside $R$ that are all adjacent to a given
  side $s$ of $R$. Then $f$ is not a forbidden error (at any level).
\end{lemma}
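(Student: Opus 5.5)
We need to show that for every level $j$, either $f$ is not a level-$j$ boundary error, or all of its checks lie in level-$j$ portal tiles. Recall that $f$ is a triangular face with three nodes, and these nodes lie in tiles of $R$ adjacent to side $s$. So the tiles containing $f$ are at most two or three consecutive tiles along $s$ (or tiles at a corner of $R$, which are adjacent to $s$ as well). We split into cases according to the level $j$ compared with the level $i$ of $R$.

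\emph{Levels $j\le i$.} Every rhombus at level $j$ is either disjoint from $R$ or contains $R$, since the rhombus-tree is nested. So all nodes of $f$ lie in a single level-$j$ rhombus, and $f$ is not a level-$j$ boundary error. In particular it is not forbidden at that level. (The nodes of $f$ lie inside $R$ by hypothesis.)

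\emph{Levels $j>i$.} Suppose $f$ is a level-$j$ boundary error, meeting distinct level-$j$ rhombuses $R'_1,\dots$ (at most three), each a descendant of $R$. Each node of $f$ lies in a tile $T$ adjacent to side $s$ of $R$. If $T\subseteq R'_k$, then $R'_k$ meets $s$, and side $s$ of $R$ contains the corresponding side of $R'_k$. Hence $T$ is a boundary tile of $R'_k$, lying on the side of $R'_k$ contained in $s$. The key point is that $f$ crosses between the $R'_k$ and stays within the row of tiles along $s$. Therefore the tile of $f$ in $R'_k$ is adjacent both to $s$ and to the side of $R'_k$ shared with another $R'_{k'}$, and that side is transverse to $s$. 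So the tile is a corner tile of $R'_k$ and hence a portal tile by definition. The portal tiles relevant to the level-$j$ boundary error are those of the rhombuses it meets, so all checks of $f$ lie in level-$j$ portal tiles. Thus $f$ is a portal error at level $j$, not a forbidden one.

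The main obstacle is making the geometric claim rigorous: that a single triangle whose nodes lie in the strip of tiles along $s$ can only cross a level-$j$ boundary transverse to $s$, and never one parallel to $s$. Any boundary between descendants of $R$ that is parallel to $s$ lies at distance at least one tile (4 lattice units) from $s$. Since the triangle has diameter $1$ and is contained in the strip adjacent to $s$, it cannot reach such a boundary. I would verify this using the coordinates, treating horizontal and diagonal sides $s$ separately (or by symmetry). I would also check the corner case where $f$ touches three rhombuses, which can only happen at a corner shared on $s$. There each relevant tile is still a corner tile of its rhombus, so the argument goes through.
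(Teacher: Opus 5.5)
Your proof is correct and follows essentially the same route as the paper. Your split into levels $j\le i$ and $j>i$ is the paper's dichotomy $R'\supseteq R$ (impossible, since $f$ lies inside $R$) versus $R'\subset R$; in the latter case both arguments note that $R'$ has a side lying on $s$ and that $f$ only crosses sides transverse to $s$, so every tile of $R'$ meeting $f$ is a corner tile and hence a portal tile. (Your three-rhombus corner case is in fact vacuous, since a single face inside the row of tiles along $s$ meets at most two tiles.)
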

\begin{proof}
  By symmetry we may assume that $s$ is a horizontal side of
  $R$. Either $f$ is wholly contained within a single tile, or it
  meets two tiles which are adjacent across a diagonal side.  In the former case
  $f$ is not a boundary error for any rhombus, so is not forbidden.

  Hence, suppose $f$ meets two tiles which are `horizontally adjacent'
  and suppose that $f$ is a boundary error for some rhombus $R'$ (so
  the diagonal side of $R$ separates these two tiles).  By the definition
  of the rhombus-tree, either $R'\supseteq R$ or $R'\subset R$, and
  since $f$ is wholly contained in $R$ the first case cannot occur:
  i.e., we must have $R'\subset R$. But then we see that $R'$ has a
  horizontal side coincident with $s$ and, in particular, any tile of
  $R'$ meeting $f$ must be a corner tile, and thus a
  portal tile. Therefore $f$ is not a forbidden error
  for $R'$. Since this applies for all rhombuses $R'$ the result follows.
\end{proof}
Finally we will need to bound the total number of rhombuses in the
rhombus-tree so we include that here.
\begin{lemma}\label{l:total-rhombuses}
  The total number of rhombuses in the rhombus-tree is less than $L^2$.
\end{lemma}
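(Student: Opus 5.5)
We count the rhombuses level by level. The root $R_0$ has side length $2L$ and sits at level $-1$; each rhombus at level $i$ has side length $L/2^i$. Since every rhombus splits into four children of half the side length, level $i$ contains exactly $4^{i+1}$ rhombuses. The deepest level is the tile level $i=\ell-2$, where the side length is $L/2^{\ell-2}=4$.

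Summing over levels, the total number of rhombuses is
\[
\sum_{i=-1}^{\ell-2}4^{i+1}=\sum_{j=0}^{\ell-1}4^{j}=\frac{4^{\ell}-1}{3}<\frac{4^\ell}{3}=\frac{L^2}{3}<L^2 ,
\]
using $L=2^\ell$.

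As an independent check, there are $(2L/4)^2=L^2/4$ tiles at the bottom level. Since each internal node of the tree has four children, the total count is at most $\frac{4}{3}\cdot\frac{L^2}{4}=\frac{L^2}{3}$, which agrees with the sum above.

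The only point needing care is the indexing convention: $R_0$ is at level $-1$ and tiles are at level $\ell-2$, so the geometric sum has exactly $\ell$ terms. Even a miscount by one level would only cost a factor of $4$ against the slack of $3$ in the estimate, so the bound is robust.
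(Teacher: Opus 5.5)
Your proof is correct and is essentially the paper's argument: the paper counts the $L^2/4$ tiles and bounds the total by $\frac{L^2}{4}\bigl(1+\frac14+\frac1{16}+\cdots\bigr)=\frac{L^2}{3}$, which is exactly your ``independent check'', and your main computation is the same geometric sum read from the root downward. One small correction to your closing aside: an extra level at the bottom would multiply the count by about $4$, which exceeds the slack factor of $3$ (giving roughly $\frac{4L^2}{3}>L^2$), so the bound is not robust in that sense---though this does not affect your actual argument.
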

\begin{proof}
  Since tiles are $4\times 4$ and $R_0$ is $2L\times 2L$ the number of
  tiles is $L^2/4$. The number of rhombuses at each level is a quarter
  of the number one level higher so the total number of rhombuses is
  then at most $L^2/4(1+1/4+1/16\dots)=L^2/3$ which is less than
  $L^2$ as claimed.
\end{proof}
\section{Algorithm}\label{s:alg}
Finding the minimum weight \emph{$M$-portal respecting} decoding is
straightforward by dynamic programming, which for the purpose of this
paper is just recursion with caching.

As we said earlier,  the restriction of the minimum weight portal
respecting decoding to any rhombus $R$ in the rhombus-tree depends
only on the syndrome inside $R$ and the portal errors for $R$.  This
is immediate because the boundary errors for $R$ disconnect the
interior of $R$ from the rest of the code.

The algorithm finds this restricted minimum weight decoding for every
possible (portal respecting) boundary configuration for every rhombus
$R$ in the rhombus-tree, by doing the following. 
Suppose that $R$'s portal errors are specified. First consider the case
when $R$ is a single tile. %
We exhaust over all possible sets of errors inside $R$, and find the
minimum weight set that, together with the boundary errors, explains
the syndrome inside $R$. There are eighteen errors wholly inside $R$
so this takes time $2^{18}$ and, since there are four sides with at
most nine errors per side there are at most $2^{36}$ boundary
configurations (this bound can easily be improved but it does not make
any difference to our final bound), so the total time for each rhombus
is $2^{54}$.

If $R$ is not a single tile, then the minimum weight decoding of the
syndrome inside $R$ can be found as follows:
  \begin{enumerate}
  \item Split $R$ into its four sub-rhombuses. These sub-rhombuses
    share some of their boundary with $R$ (which means that boundary
    is already specified), but some with each other. We call the boundaries shared by these sub-rhombuses the \emph{interior boundary}.
  \item For each possible interior boundary configuration that is
    portal-respecting, we can find the minimum weight error-set for
    each sub-rhombus and join them together to get a portal respecting
    error-set for $R$.
  \item The lowest weight example from the previous step is the
    minimum-weight error-set for $R$ given its specified boundary.
  \end{enumerate}

  Assuming that we already know (or have cached) the minimum weight
  configurations for all the sub-rhombuses for all of their possible
  boundaries, then the time to execute the steps above is bounded
  above by the number of possible interior boundary configurations. By
  Lemma~\ref{l:num-interior-boundary} there are at most $2^{36(M+1)}$
  such possibilities. We remark that since this bound is greater than
  that we gave for single tiles above, it applies in both cases.

  However, we need to calculate this minimum weight decoding for every
  possible boundary configuration for $R$. By Lemma~\ref{l:num-boundary} there are
  at most $2^{36(M+1)}$ possible boundary configurations.

  Finally, we need to do this calculation for every rhombus $R$ in the
  rhombus-tree $\cR$. Since, by Lemma~\ref{l:total-rhombuses}, there are less than $L^2$ such rhombuses, this gives
  total complexity of  $L^22^{72(M+1)}$ which completes the proof of Lemma~\ref{l:alg}.

\section{Modification}\label{s:mod}
In this section we show that we can modify an arbitrary error set to
make it portal respecting without increasing its weight too much. This
will not be true for every offset of $R_0$, but (again following
Arora's ideas for Euclidean TSP) will be true `on average' and, in
particular, will be true for at least one offset.

\begin{figure}[t]
    \begin{subfigure}[t]{0.5\columnwidth}
      \begin{center}
        \includegraphics[width=0.9\columnwidth]{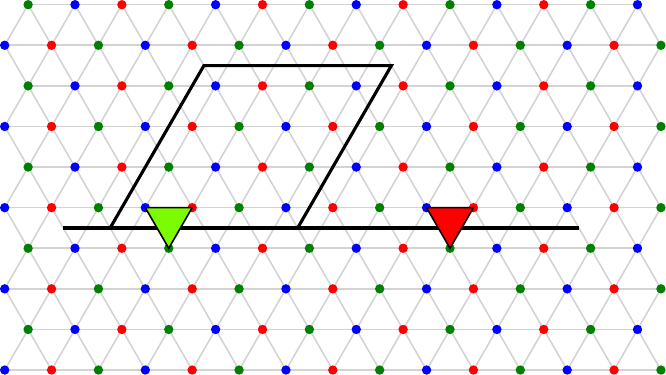}
        \caption{\label{fig:modification1}}
      \end{center}
    \end{subfigure}
    \begin{subfigure}[t]{0.5\columnwidth}
      \begin{center}
        \includegraphics[width=0.9\columnwidth]{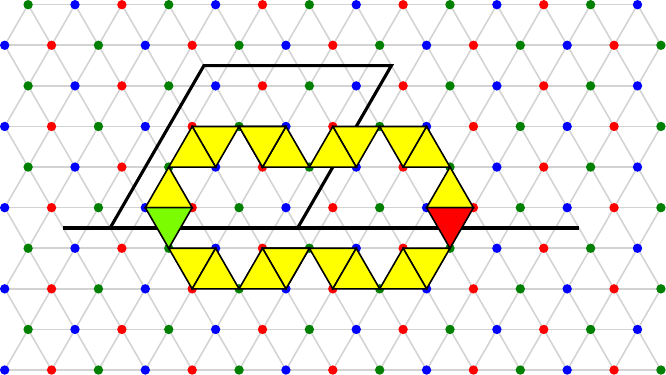}
      \end{center}
      \caption{\label{fig:modification2}}
    \end{subfigure}
  \caption{\label{fig:modification}\textbf{(a)} Given an error $f$ (the red
    triangle) look in the nearest portal tile (the rhombus shown) and
    pick the error $p$ with the same colour-orientation (the green
    triangle) in the nearest portal tile (the tile shown). \textbf{(b)} We can join
    the red error to the green error by two `paths' as shown: the set
    of all the shown errors has no syndrome, so the syndrome of the
    combined yellow and green errors (together these form $\cE_f$) is
    the same as the syndrome of the red error. }
\end{figure}

\begin{lemma}\label{l:fix-single-error}
  Let $f$ be a forbidden error (at any level), and let $i$ be the minimum level for
  which it is a boundary error (i.e., the largest size rhombus). Then there
  is a portal respecting error set $\cE_f$ with the same syndrome as
  $f$ and weight at most $\frac{11L}{3\times2^iM}=O(\frac{L}{2^iM})$.
\end{lemma}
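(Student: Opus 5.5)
The plan is to move $f$ along the side of the level-$i$ rhombus to the nearest portal tile, as in Figure~\ref{fig:modification}. Fix the level-$i$ rhombus $R$ (either of those containing nodes of $f$). Since $f$ is forbidden at level $i$ and $i$ is the minimal level at which $f$ is a boundary error, $f$ crosses a side $s$ of $R$ (shared with a neighbouring level-$i$ rhombus $R^*$), and $f$ lies in tiles adjacent to $s$ on both sides. In particular $2^{\ell-2-i+1}>M$, since otherwise $f$ would be a portal error. Along $s$ the portal tiles are spaced every $T=2^k/M$ tiles, where $2^k=L/(4\cdot 2^{i})\cdot 2$ is the number of tiles per side at level $i$ (recall the side length is $2L/2^{i+1}\cdot$ after indexing from level $-1$). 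So the nearest portal tile $P$ on $s$ is at most $T/2$ tiles away, that is, at lattice distance at most $2T$.

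Let $p$ be the error of the same colour-orientation as $f$ in the pair consisting of $P$ and its neighbour across $s$. For this choice, use the property illustrated in Figure~\ref{fig:two_adjacent_tiles} that each adjacent tile pair contains an error of each of the six colour-orientations. Translating $f$ to $p$ is a translation by a lattice vector $v$ parallel to $s$, and $v$ is a multiple of the colour period because it is a multiple of $4$ steps along $s$. I will then exhibit two "paths" of triangles, one on each side of $s$, as in Figure~\ref{fig:modification2}, such that the set $\{f,p\}\cup$(paths) has empty syndrome. Concretely, in the dual lattice a translate of a triangle by three (or a colour-preserving number of) unit steps along an axis can be cancelled by a strip of triangles whose syndrome is exactly the two endpoint defect pairs. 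I expect to verify this on a single step of length $4$ (one tile) and then concatenate copies, since consecutive copies cancel at shared endpoints. Taking $\cE_f=\{p\}\cup$(paths) then gives the syndrome of $f$.

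For the weight bound, each tile step should cost a constant number of triangles per path, around $\tfrac{22}{3}\cdot$ per tile-length in total once the half-spacing $T/2$ is included. This yields $|\cE_f|\le c\cdot T/2+O(1)\le \frac{11L}{3\cdot 2^i M}$ after substituting $T$ in terms of $L,2^i,M$. The fiddly part is the exact constant, which needs a careful count of the triangles in the strip per unit of length; the $O(\cdot)$ form is clear.

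The main obstacle is showing that $\cE_f$ is portal respecting. For this I use Lemma~\ref{l:boundary-tile}. Every path triangle lies in tiles of $R$ (respectively $R^*$) adjacent to the side $s$, and I must arrange that no path triangle itself crosses $s$, so that Lemma~\ref{l:boundary-tile} applies to each path error within $R$ or within $R^*$. The error $p$ has all of its nodes in portal tiles of level $i$. By Lemma~\ref{l:portal-higher-levels}, those tiles are also portal tiles for every sub-rhombus containing them, and since $p$ is not a boundary error at any level above $i$ (it lies near $s$ inside $R\cup R^*$), $p$ is nowhere forbidden. A care point is when $f$ sits near a corner of $R$. In that case I would move it toward the corner tile, which is always a portal tile, so the same argument applies.
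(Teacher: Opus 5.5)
Your overall route is the paper's. Your portal-respecting argument also matches the paper: path errors lie in tiles adjacent to $s$ and never cross it, so Lemma~\ref{l:boundary-tile} applies. The error $p$ has its nodes in level-$i$ portal tiles, so by Lemma~\ref{l:portal-higher-levels} it is a portal error wherever it is a boundary error, and by minimality of $i$ it is not a boundary error above level $i$.

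The gap is in the part that carries the content of the lemma. You never construct the two paths or count them, and the mechanism you sketch cannot work. The node colouring of the dual lattice has period $3$ along $s$: in the paper's coordinates the colour of $(a,b)$ is $a-b \bmod 3$. So a one-tile shift by $4$ is \emph{not} colour-preserving. Your claim that $v$ is colour-preserving \emph{because} it is a multiple of $4$ steps is therefore wrong twice over:
\begin{itemize}
\item $v$ is in general not a multiple of $4$. The six errors joining a tile pair sit at offsets $0,1,2$ within the tile, and the one matching $f$'s colour-orientation usually has a different offset from $f$.
\item Multiples of $4$ are not colour-preserving anyway.
\end{itemize}
For the same reason, ``verify one step of length $4$ and concatenate'' fails. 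On the side of $s$ where $f$ has a single node, such a step needs a set of triangles on that side whose syndrome is two nodes of different colours. That is impossible: every triangle has one node of each colour, so every syndrome has the same parity in each colour class.

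What is true, and what the paper uses, is that $3\mid D$ simply because $p$ and $f$ have the same colour-orientation, and $D\ge 3$ because $p\neq f$. The working construction goes as follows.
\begin{itemize}
\item Add the reflections of $f$ and $p$ in their horizontal edges; these lie wholly on one side of $s$.
\item Then $f\oplus\mathrm{refl}(f)$ and $p\oplus\mathrm{refl}(p)$ each have syndrome two same-coloured nodes, one on each side of $s$. So on each side you only need to move a single defect horizontally by $D$.
\item Do this by a zigzag of edge-sharing triangle pairs in the two rows next to $s$. Each pair moves a defect by $\sqrt3$, so a horizontal shift of $3$ costs four triangles. These rows lie in tiles adjacent to $s$, so your use of Lemma~\ref{l:boundary-tile} goes through.
\end{itemize}
This gives $|\cE_f|=\frac83D+3\le\frac{11}{3}D$, using $D\ge 3$ to absorb the additive constant; your ``$c\cdot T/2+O(1)$'' never absorbs it. Then $D\le L/(2^iM)$ gives the stated bound.

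As written, your weight bound is an unverified guess (``around $\frac{22}{3}$ per tile-length''). Your tile count is also off by a factor of $2$: a level-$i$ rhombus has side $L/2^i$, hence $L/(4\cdot 2^i)$ tiles per side, not $L/(2\cdot 2^i)$.
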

\begin{proof}
  Let $R$ and $R'$ be the level-$i$ rhombuses meeting $f$.  Without
  loss of generality we may assume that these are adjacent across a
  horizontal side $s$. Let $p$ be the equivalent portal error to $f$
  (same orientation and colours) in the nearest portal tile to $f$
  along side $s$ -- see Figure~\ref{fig:modification1}. Suppose that
  $p$ is at distance $D$ from $f$. Since there are $M+1$ portal tiles
  on each side of $R$, and each side has length $L/2^i$, $D$ is at
  most $\frac{L}{2^iM}$. (Note that since $R$ has a forbidden error,
  not all of its boundary tiles are portals, so it must have $M+1$
  portal tiles.) We can join $p$ to $f$ using two paths of errors, one
  path in $R$ and one in $R'$ -- see
  Figure~\ref{fig:modification2}. Since each path uses four errors to
  move three across horizontally, and we also add the reflection of
  both $p$ and $f$ along their horizontal side, we see that the total
  number of errors to do this is $\frac83D+2$.  Let $\cE_f$ be the
  union of $p$ and these two paths, so $|\cE_f|=\frac83D+3\le
  \frac{11}3D\le \frac{11L}{3\cdot 2^iM}$ where the first inequality
  follows since $D\ge3$ for any forbidden error (that is the closest
  an error of the same orientation can be), and the second from the
  upper bound on $D$ above.

  It is easy to see that $\cE_f\cup \{f\}$ has no syndrome, which
  immediately implies that $\cE_f$ and $f$ have the same syndrome.
  
  Finally we need to check that $\cE_f$ is portal respecting, both in
  the horizontal direction and in the diagonal direction. Since the
  set is entirely contained in the tiles on either side of the line
  separating $R$ and $R'$, and only crosses this line at the portal
  error $p$, we see that the $\cE_f$ does not contain any forbidden
  errors crossing a horizontal side of any rhombus except possibly at
  $p$. Moreover, since we chose $i$ to be the minimum level for which
  $f$ is a boundary error, any larger rhombus containing $f$
  contains both $R$ and $R'$ and, in particular, $p$ is not a boundary
  error for such a rhombus, so cannot be a forbidden error at any
  lower level. By Lemma~\ref{l:portal-higher-levels} combined with the
  fact that $p$ is a portal error for $R$, it is a portal error for
  all higher level rhombuses which contain it. Thus, $p$ is not a
  forbidden error.

  Next we consider boundary errors in the diagonal
  direction. All the errors in the path
  joining $f$ and $p$ in $R$ lie in tiles adjacent to $s$ so, by
  Lemma~\ref{l:boundary-tile}, none of these errors can be
  forbidden. Similarly, no error in the path joining $f$ to $p$ in
  $R'$ is forbidden.  Finally, we need to check $p$ itself. Since $p$
  does not cross the diagonal boundary of any tile, it is not a
  diagonal boundary error at any level, so is definitely not a
  diagonal forbidden error.
\end{proof}

We define \emph{fixing} an error $f$ to be process whereby we replace
$f$ by $\cE_f$ in as described in Lemma~\ref{l:fix-single-error} if
$f$ is forbidden, and otherwise leave it unchanged. The \emph{cost} of
fixing an error is the bound in Lemma~\ref{l:fix-single-error} if the
error is a forbidden error which has minimum forbidden level equal to
$i$, and zero if the error is not a forbidden error.

\begin{lemma}\label{l:offset}
  Suppose that $x$ is an error. If we position the rhombus-tree
  randomly (i.e., pick the offset at random in $[-L]\times[-L]$) then the expected cost of fixing $x$ is at most
  $ \frac{22\log L}{3M}=O(\frac{\log L}{M})$.
\end{lemma}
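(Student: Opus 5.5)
The plan is the standard Arora-type averaging argument. We bound the expected cost by summing, over the levels $i$ of the rhombus-tree, the probability that $x$ is a boundary error at level $i$ times the cost bound at that level from Lemma~\ref{l:fix-single-error}.

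\textbf{Reduction to levels.} If $x$ is not forbidden its cost is zero. If it is forbidden with minimum boundary level $i$, the cost is $\frac{11L}{3\cdot 2^iM}$. In either case the cost is at most
\[
\sum_{i}\mathbf{1}[x \text{ is a level-}i\text{ boundary error}]\cdot\frac{11L}{3\cdot 2^iM},
\]
where $i$ ranges over levels $0\le i\le \ell-2$. Level $-1$ contributes nothing, since $x$ never crosses the boundary of $R_0$ when we only consider errors inside the syndrome box. By linearity of expectation it then suffices to bound $\Pr[x \text{ is a level-}i\text{ boundary error}]$ for each $i$.

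\textbf{Probability at a fixed level.} The level-$i$ rhombuses have side $L/2^i$, and their boundaries form two families of parallel lines: horizontal lines and diagonal lines, each spaced $L/2^i$ apart. The offset $(a,b)$ is uniform on $[-L]\times[-L]$ and $L/2^i$ divides $L$. Hence the horizontal coordinate of the offset modulo $L/2^i$ is uniform, and so is the diagonal coordinate; one coordinate controls the diagonal lines and the other the horizontal lines.

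A single error $x$ is a triangle with nodes spanning only two consecutive values in each coordinate. It therefore crosses a given family of lines only when one specific line lands in one specific position, which has probability $2^i/L$ for each family. A union bound gives
\[
\Pr[x \text{ is a level-}i\text{ boundary error}]\le \frac{2\cdot 2^i}{L}.
\]

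\textbf{Summing.} Multiplying by the cost at level $i$ gives $\frac{2\cdot 2^i}{L}\cdot\frac{11L}{3\cdot2^iM}=\frac{22}{3M}$, independent of $i$. There are at most $\ell\le\log L$ levels that can contribute, so the expected cost is at most $\frac{22\log L}{3M}$, as claimed.

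\textbf{Main obstacle.} The care lies in the level count and the crossing probability, not in any hard idea. Specifically, I would check two things:
\begin{itemize}
\item The levels contributing nonzero cost number at most $\log L$. They run from $0$ to $\ell-2$, which is $\ell-1\le\log L$ levels. At the finest levels, where rhombuses have at most $M$ tiles per side, there are no forbidden errors at all, which only helps.
\item Tile boundaries align with the coordinate lattice, so that a triangle meets a given line family with probability exactly $2^i/L$ and not $2\cdot 2^i/L$. A triangle's nodes differ by at most $1$ in each coordinate, so only one residue of the offset places a boundary line between them.
\end{itemize}
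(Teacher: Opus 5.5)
Your proposal is correct and is essentially the paper's own proof: you bound the chance that $x$ is a level-$i$ boundary error by $2\cdot 2^i/L$ (one term each for the horizontal and diagonal line families). You multiply by the cost $\frac{11L}{3\cdot 2^iM}$ from Lemma~\ref{l:fix-single-error} to get $\frac{22}{3M}$ per level, and sum over the $\ell-1\le\log L$ levels $0,\dots,\ell-2$. Your pointwise bound $\sum_i \mathbf{1}[\cdot]\,\frac{11L}{3\cdot 2^iM}$ followed by linearity of expectation is only a slightly more explicit version of the paper's step, which bounds the probability that the minimum level equals $i$ by the probability that $x$ is a level-$i$ boundary error.
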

\begin{proof}
  The chance that the error $x$ is a boundary error for a horizontal
  side of a level-$i$ rhombus is $2^i/L$ and similarly for a diagonal side. Therefore, that $x$ is a boundary error with minimum
  level $i$ is at most $2\times 2^i/L$. By Lemma~\ref{l:fix-single-error} the
  cost of fixing an error with minimum level equal to $i$ is at most
  $\frac{11L}{3\cdot2^iM}$.  Combining these two facts shows that the
  expected cost of fixing an error $x$ is at most
  \[\sum_{i=0}^{\ell-2} \frac{2\times 2^i}{L}\frac{11L}{3\cdot2^iM} = \sum_{i=0}^{\ell-2} \frac{22}{3 M}=\frac{22(\ell-1)}{3M} \le \frac{22\log L}{3M} \]
  as claimed.  
\end{proof}

\begin{proof}[Proof of Lemma~\ref{l:mod}]
  We just put things together -- we fix each forbidden error in
  turn. By Lemma~\ref{l:fix-single-error} fixing an error does not add
  any new forbidden errors so at the end of this process we have a set
  with no forbidden errors -- i.e., a portal respecting set.

  Explicitly, let $\cF$ be the set of forbidden errors in $\cE$, and set
  \[\cE'=\cE\oplus\bigoplus_{f\in \cF}\left(\cE_f\oplus f\right)=(\cE\setminus\cF)\oplus \bigoplus_{f\in \cF} \cE_f
  \]
  Using the first formulation, and the fact that the sets $\cE\oplus
  f$ have no syndrome we see that $\cE'$ has the same syndrome as
  $\cE$.  Using the second formulation, and the observing that both
  $\cE\setminus\cF$ and all the sets $\cE_f$ are portal respecting, we
  see that the set $\cE'$ is portal respecting. Finally
  \[
  |\cE'|\le |\cE|+\sum_{f\in \cF}|\cE_f|
  \]
  so
  \[
  \Exp(|\cE'|)\le |\cE|+\Exp(\sum_{f\in \cF}|\cE_f|)= |\cE|+\sum_{f\in \cF}\Exp(|\cE_f|)\le |\cE|+|\cE|\frac{22\log L}{3M}
  \]
  as claimed.
\end{proof}

\section{Bounded size codes}\label{s:bounded_size}

\begin{figure}[t]
  \begin{center}
    \includegraphics[height=0.5\columnwidth, angle=90]{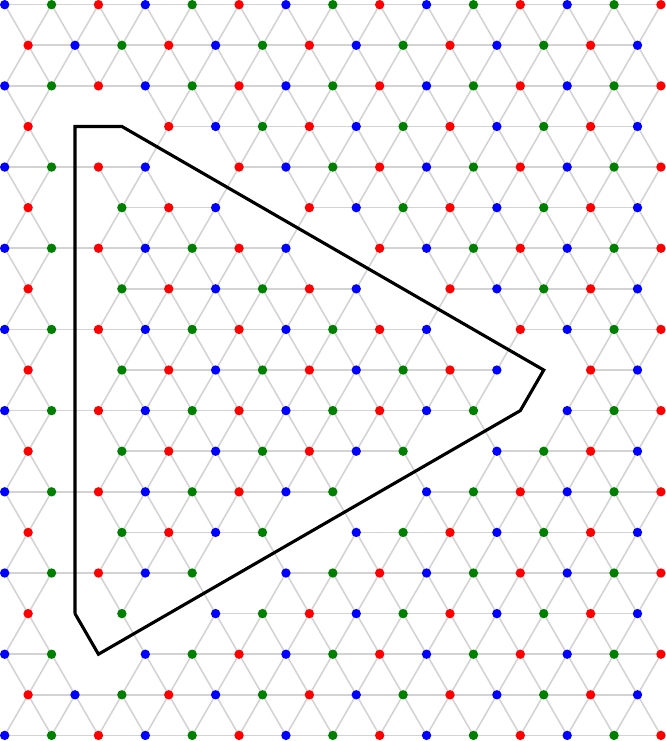}
  \end{center} 
  \caption{\label{fig:restricted_lattice}The distance 11
    reduced colour code -- the triangular region $\cT$ shows the
    embedding of the bounded colour code. We have omitted the checks
    on the boundary of the triangle as they are not enforced in the
    reduced colour code. The two key points of the reduced code
    are that it is infinite so we can apply the modifications we used
    in Section~\ref{s:mod}, and if an error set is entirely inside
    $\cT$ then its syndrome is identical in the bounded colour code
    and the reduced colour code.}
\end{figure} 

In this section we show how we can modify the techniques we have used
to prove Proposition~\ref{p:cc} to prove
Theorem~\ref{t:cc-bounded}. Essentially we just embed the bounded
problem in an infinite lattice, since then we can apply all of our
earlier techniques. This is slightly more complicated than one might
expect because the colour code with boundary is not just the
restriction of the infinite lattice to the finite triangle as there
are some faces (errors) in the infinite lattice that meet a check (node) in the triangular
region, but do not correspond to qubits in the colour code with boundary --
for example errors that meet a single node of $\cT$ (except for the
three corner errors).

We work with two different settings: the colour code with boundary,
and the reduced lattice which is an embedding of the colour code
with boundary into an infinite lattice. Whilst the colour code with
boundary can be defined for many different polygons, we will restrict
to the simplest (and standard case) of the triangular region as shown
in Figure~\ref{fig:bare-lattice}. We call this the \emph{bounded
colour code} or \emph{colour code with boundary} and we call its
domain $\cT$. We will say a face (error) is contained in $\cT$ if it
corresponds to a qubit in the bounded colour code, including the
boundary errors, and the three `corner' boundary errors.

We can think of $\cT$ as being a subset of the infinite lattice and we
can think of the boundary errors of $\cT$ as being triangles in the
infinite lattice but with some nodes not in $\cT$ and thus not being
checked. We define the \emph{reduced colour code} to be the infinite
lattice with the nodes that are in boundary errors of $\cT$ but do not
correspond to checks in $\cT$  removed. See
Figure~\ref{fig:restricted_lattice}. It is important that the checks
further out from $\cT$ are \emph{not} removed -- if we removed these
checks it would allow a single error to trigger a single detector on
the boundary of $\cT$, which is not possible in the bounded colour
code (except at a corner).

As before the proof divides into two key parts: one describing the
polynomial time algorithm and one showing that that algorithm does
produce a set which is a $1+\eps$ approximation to the minimum weight
decoding.  In more detail the steps are:
\begin{enumerate}
\item The algorithm:
  \begin{enumerate}
\item\label{step-find-min-weight} Find the minimum-weight portal
  respecting error-set $\cE_2$ generating $\cS$ in the reduced
  colour code. This can be done in polynomial time using the same
  dynamic programming algorithm as described in Section~\ref{s:alg}
  except that we do not impose any constraint
  on the parity of the number of errors at any of the removed checks.
\item \label{step:post-process} `Post-process' $\cE_2$ forming a new
  set $\cE_3$ such that
  \begin{itemize}
  \item
    $|\cE_3|\le |\cE_2|$, 
  \item 
    all the errors in $\cE_3$ lie in $\cT$,
  \item $\cE_3$ also generates $\cS$ in the reduced code,
  \end{itemize}
  Then $\cE_3$ generates $\cS$ in the bounded code.
  \end{enumerate}
\item \label{step-bounded-to-restricted}To prove that the set the
  algorithm finds is a $1+\eps$ approximation, it suffices to show
  that, given a syndrome $\cS$ in the bounded colour code, there
  exists a portal respecting error-set in the \emph{reduced colour
  code} that has weight at most $(1+\eps)OPT$ where $OPT$ is the
  minimum weight decoding in the bounded colour code (since then
  $\cE_3$ has weight at most this).  This can be done with the
  following sub-steps.
  \begin{enumerate}
  \item Let $\cE$ be the minimum weight error-set for the syndrome in
    the bounded colour code.  Then $\cE$ is a solution to the syndrome
    $\cS$ in the reduced colour code (in fact it turns out to be a minimum
    weight solution but we do not need that fact).
  \item Form a modified set $\cE_1$ that is a portal-respecting error-set
  for $\cS$ in the reduced lattice. This is exactly from the main
  body of the paper. The missing checks do not affect this step. We
  know $|\cE_1|\le (1+\eps)|\cE|$.
  \end{enumerate}

\end{enumerate}
The set $\cE_3$ from Step~\ref{step:post-process} is our approximate
solution -- indeed we have
\[|\cE_3|\le |\cE_2|\le|\cE_1|\le(1+\eps)|\cE|=(1+\eps)OPT\]
where the first and third inequality follow from the properties listed
above, and the second inequality follows since $\cE_2$ is the minimal
portal respecting error-set so has size at most that of $\cE_1$ (since
$\cE_1$ is portal respecting). We see that it is important that the
post-processing step only take polynomial time (so the whole algorithm
remains polynomial. Also note that $\cE_3$ does \emph{not} need to be
portal respecting.

Step~\ref{step-find-min-weight} and  Step~\ref{step-bounded-to-restricted} are from the first part
of this paper.  The one part that remains is
Step~\ref{step:post-process}: we need to show that we can do the
post-processing. We start by showing that we can force the errors to
lie in a half-plane.

\begin{figure}[t]
      \begin{subfigure}[t]{0.5\columnwidth}
      \begin{center}
        \includegraphics[height=0.9\columnwidth, angle=-90]{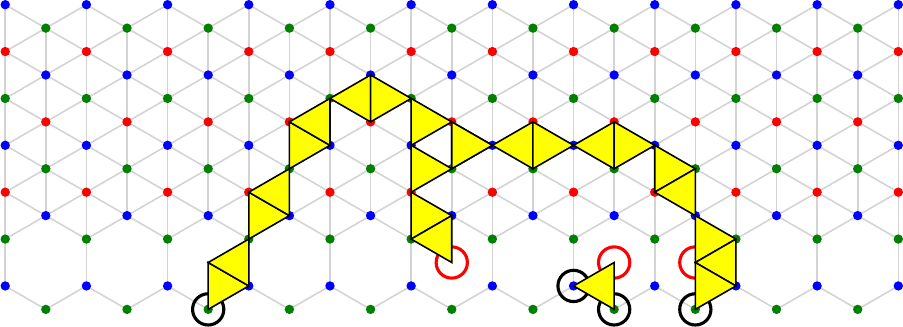}
        \caption{\label{fig:projection-a}}
      \end{center}
      \end{subfigure}
    \begin{subfigure}[t]{0.5\columnwidth}
      \begin{center}
        \includegraphics[height=0.9\columnwidth, angle=-90]{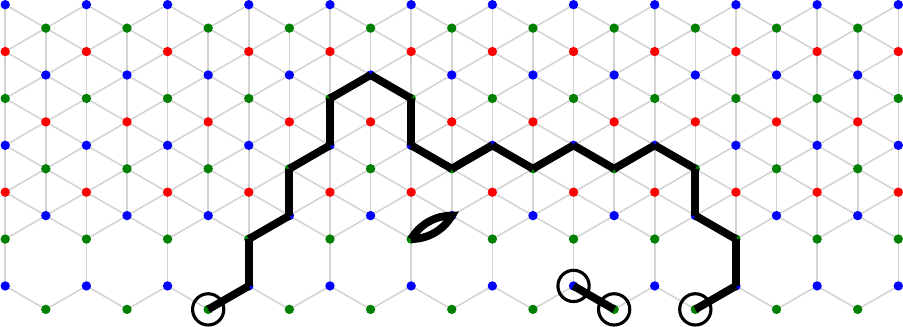}
        \caption{\label{fig:projection-b}}
      \end{center}
    \end{subfigure}

    \vspace{1cm}
    \begin{subfigure}[t]{0.5\columnwidth}
      \begin{center}
        \includegraphics[height=0.9\columnwidth, angle=-90]{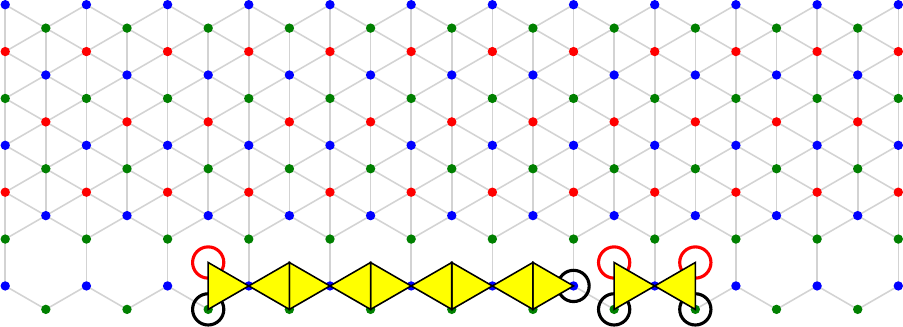}
        \caption{\label{fig:projection-c}}
      \end{center}
    \end{subfigure}
    \begin{subfigure}[t]{0.5\columnwidth}
      \begin{center}
        \includegraphics[height=0.9\columnwidth, angle=-90]{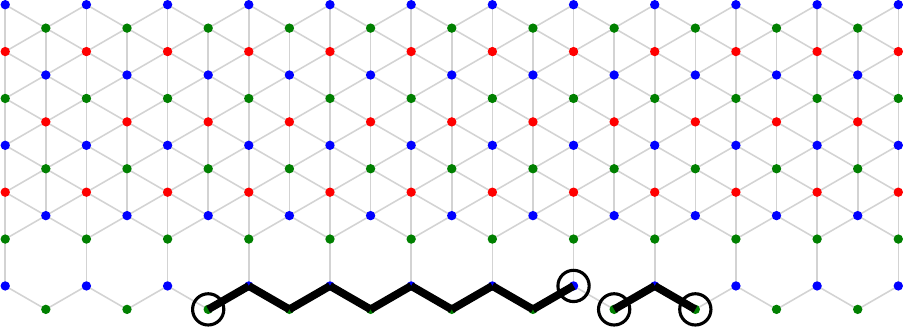}
        \caption{\label{fig:projection-d}}
      \end{center}
    \end{subfigure}
  \caption{\label{fig:projection}The four steps in the post-processing described in
    Lemma~\ref{l:project-half-plane}; working clockwise from the top
    left: \textbf{(a)} we start with the original error-set.  Top right \textbf{(b)}: we have the
    error-set viewed as edges; note one edge (shown as two arcs) is
    actually a double edge so cancels with itself.  Then bottom right \textbf{(d)} we
    have the projected edge error-set, and then bottom left \textbf{(c)} the actual
    post-processed error-set. We see that the syndrome of the two sets
    is the same except for the third row (nodes circled in red). }
\end{figure}
 
\begin{lemma}\label{l:project-half-plane}
  Suppose that $\cE$ is a set of errors in a half plane oriented as in
  Figure~\ref{fig:projection-a} with syndrome entirely contained in the
  first three rows of vertices (one row of each colour). Then there is
  an error set $\cE'$ which is entirely contained in these three rows
  such that $|\cE'|\le |\cE|$ and that the syndrome of $\cE'$ is equal
  to the syndrome of $\cE$ except (possibly) on the third row.
\end{lemma}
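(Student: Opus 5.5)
The plan is to \emph{sweep the error-set downwards} towards the three boundary rows, never increasing its weight and only changing the syndrome on rows that the sweep has already passed, or on the third row. As in Figure~\ref{fig:projection-b}, I will view each error as an \emph{edge} together with a \emph{side}. With rows oriented as in Figure~\ref{fig:projection-a}, each row is monochromatic and the colours cycle from row to row. Each triangular face therefore has two of its nodes on a common `column' line, two rows apart, and its third node lies in the intermediate row, to the left or to the right. So an error is a column edge spanning rows $k$ and $k+2$, plus a choice of side.

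The syndrome at a node is then the parity of the number of column edges ending there, plus the number of triangles having it as middle node. A pair of triangles sharing the same column edge (one on each side) contributes nothing at the two endpoints. This is the `double edge cancels' phenomenon in the figure.

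\textbf{Step 1 (edge projection).} For each column I consider the mod-2 chain formed by its column edges. I then replace each column edge spanning rows $k,k+2$ with $k\ge 2$ by the edge in the same column lying in rows $\{1,2,3\}$ (i.e.\ spanning rows $1,3$). The choice is made so that the parity contributed at each node of rows $1$ and $2$ is unchanged. Concretely, I will push edges down two rows at a time, telescoping along the column. This is because the sum of the edges from row $k$ to row $k+2$ and from row $k-2$ to row $k$ has boundary only at rows $k-2$ and $k+2$. The hypothesis that the syndrome vanishes above row $3$ is what makes this telescoping consistent. Every endpoint of the column chain above row $3$ must be cancelled by a middle-node contribution from a neighbouring column, and I will account for these in Step~2.

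\textbf{Step 2 (middle nodes).} Each triangle also contributes its middle node, which lies in the adjacent column and strictly between the edge's endpoints. After projection, I choose the side of each projected edge so that the middle node lands on row $2$. I then check that the total parity at each row-$2$ node agrees with the original. For this I will use the vanishing syndrome above row $3$ once more: summing the original syndrome over all nodes in a column above a given row gives a parity identity. This identity relates the middle-node contributions from the two neighbouring columns to the column-edge endpoints. Since projection does not increase the number of edges per column (mod-2 reduction only cancels them), $|\cE'|\le|\cE|$ is immediate.

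The main obstacle is Step~2: making sure that the parities on rows $1$ and $2$ are exactly preserved once both the column edges and the middle nodes are projected. Everything discarded or altered must be visible only on row $3$. If the direct projection fails to match the parities, my fallback is a local top-down induction using the hexagon relation: the six triangles around a node have zero syndrome. Take the topmost row $r>3$ met by the error-set. Any hexagon containing at least three of the current errors can be replaced by its complement without increasing the weight, lowering the error-set. Otherwise the zero syndrome on row $r$ forces errors to pair up along that row. I would then argue that such a pair can be translated down three rows (preserving colours) without increasing the weight, using the edge picture to show that this affects only lower rows.
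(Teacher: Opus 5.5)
There is a genuine gap. Steps 1--2 are never closed, and Step 1 as stated cannot work. With monochromatic rows, the columns alternate between those meeting rows $1,3,5,\dots$ and those meeting rows $2,4,6,\dots$. Every triangle lying in the first three rows has its column edge spanning rows $1$ and $3$. So an error whose column edge lies in a column of the second kind has no `same-column' replacement inside rows $1$--$3$. Its contribution must be moved to a different column, which a per-column projection cannot do. This happens in the simplest examples. Take the six triangles around a node of row $3$ and drop one of the two that lie in rows $1$--$3$. The remaining five have syndrome inside rows $1$--$3$, and two of them have column edge spanning rows $2,4$. A valid $\cE'$ is the dropped triangle alone, whose side is dictated by the parities on row $2$. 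In the other columns, replacing each edge $(k,k+2)$ by $(1,3)$ changes the parity at rows $1$ and $3$ by the number of edges moved. Your telescoping identity only says that each downward slide creates two new endpoints, which must be cancelled by middle-node contributions from neighbouring columns. That is exactly the unformulated `parity identity' of Step~2 that you flag as the main obstacle. Without a concrete map from errors to errors, the weight bound $|\cE'|\le|\cE|$ is not established either.

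The fallback does not rescue this. At the topmost row $r$, the pairing of the two triangles over a column edge spanning rows $r-2,r$ is correct. But that pair has syndrome at the two nodes of row $r-1$ flanking the column, so translating it down three rows changes the syndrome on row $r-1$ (and on row $r-4$). When row $r-1$ lies above row $3$ and is not of row $3$'s colour, this destroys syndrome you must preserve. When it is of row $3$'s colour, the right move is simply to delete the pair.

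The missing idea is to discard the colour of the third row entirely. Every triangle has exactly one node of that colour, so each error becomes a single edge joining its other two nodes. These edges live in the honeycomb graph $G$ on the two remaining colours. The syndrome of $\cE$ on those colours is exactly the set of odd-degree vertices of $G$, which by hypothesis lies in rows $1$ and $2$. The paper then projects every vertex of $G$ onto the corresponding vertex of the same colour in rows $1,2$. Edges go to edges (or loops), and every vertex above row $2$ has even degree, so each vertex of the bottom strip keeps its degree parity. Deleting loops and reducing multi-edges mod $2$ gives $G'$ with at most $|\cE|$ edges and the same odd-degree set. Each edge of $G'$ is then lifted to the unique triangle of the half-plane containing it, whose third node is on row $3$.

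The syndrome on the discarded colour is never tracked during this process. At the end it is confined to row $3$ automatically: for $\cE'$ because it lies in rows $1$--$3$, and for $\cE$ by hypothesis. Your approaches try to keep the syndrome correct on all rows above $3$ throughout. Your bookkeeping also splits each error by geometric role (column endpoints versus middle node) rather than by colour. So you never obtain a structure in which the syndrome to be preserved is a graph boundary that a single global projection preserves.
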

 The whole of the process constructing $\cE'$ is shown in
 Figure~\ref{fig:projection} and that may be easier to follow than the
 formal proof below. Note in the figure we have omitted the checks on
 the third row as we do not require the syndrome to agree on that row.
\begin{proof}
  Consider just the vertices in the half plane which are not the
  colour of the third row. Every error in $\cE$ contains exactly two
  of these vertices (and one of the vertices of the colour we are
  ignoring) so this naturally forms a graph $G$ on these vertices
  (some edges could be double edges if inside two errors) -- see
  Figure~\ref{fig:projection-b}. The vertices of odd degree in this
  graph are exactly the defects that do not occur in the omitted
  colour, so all lie in the first two rows of the
  graph. We form a new graph -- in fact, a multi-graph with loops --
  by projecting each vertex onto the corresponding vertex in the first
  two rows.  Since none of the projected vertices are in the first two
  rows, they all have have even degree. Therefore the parity of the
  degree of every vertex in the first two rows is unchanged by the
  projection.  Since loops are degree two we can remove them without
  changing the parity of any vertex. Similarly we can replace any
  multi-edge by zero or one edges preserving the parity of the number
  of edges -- see Figure~\ref{fig:projection-d}. We now have a genuine
  graph $G'$ (no loops or multi-edges) with exactly the same vertices
  having odd degree as in $G$
  and the number of edges at most the number of edges in $G$ which is
  as most $|\cE|$. Let $\cE'$ be the set formed by replacing each edge
  of $G'$ by the unique face (error) in the half plane containing it
  then we have the desired set $\cE'$ -- see Figure~\ref{fig:projection-c}. Obviously, the
  syndrome of $\cE'$ is contained in the first three rows. In the
  non-omitted colour the defects are exactly the odd degree vertices
  in $G'$ which are the odd degree vertices in $G$, and in turn are
  the defects in the non-omitted colour in the sydnrome of $\cE$. By
  construction and hypothesis the syndrome of both $\cE$ and $\cE'$ in
  the omitted colour all lie in the third row of the lattice. Hence,
  we see that the set $\cE'$ satisfies the properties claimed.

\end{proof}
Now it is easy to prove that we can do the post-processing step.
\begin{lemma}
  Suppose that $\cS$ is a syndrome entirely contained in $\cT$, and
  that $\cE$ is an error-set generating $\cS$ in the reduced
  lattice. Then there is an error-set $\cE'$ with all of its errors in
  $\cT$, that generates $\cS$, and has $|\cE'|\le |\cE|$. Moreover,
  given $|\cE|$, the set $\cE'$ can be found in polynomial time.
\end{lemma}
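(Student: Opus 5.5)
The plan is to apply Lemma~\ref{l:project-half-plane} three times, once for each side of the triangle $\cT$. Each side of $\cT$ determines a half plane containing $\cT$. Near that side, the first three rows of the half plane consist of the boundary row of checks of $\cT$ together with the removed checks adjacent to it.

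First I would fix the orientation and choose the rows. For a given side, take the half plane on the far side of the line through the outermost \emph{removed} nodes, oriented so that its first rows are:
\begin{itemize}
\item the row of removed nodes;
\item the boundary row of checks of $\cT$;
\item the next row inward.
\end{itemize}
The syndrome of $\cE$ in the reduced code, ignoring parities at removed nodes, lies in $\cT$. I must arrange that the rows whose syndrome is preserved are exactly the rows where the constraint matters. The row where the syndrome may change should be the removed-node row, which carries no constraint.

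Then I apply Lemma~\ref{l:project-half-plane} with the roles reversed: reflect so that the errors lying \emph{outside} $\cT$ beyond this side form the half-plane configuration. Concretely, split $\cE$ into the part $\cE_{\mathrm{out}}$ meeting nodes strictly outside the side and the rest. $\cE_{\mathrm{out}}$ lives in the outer half plane, and its syndrome there is confined to the first three rows, namely the removed row and the two adjacent rows, because checks further out are not removed and must be satisfied. Lemma~\ref{l:project-half-plane} then replaces $\cE_{\mathrm{out}}$ by a set $\cE'_{\mathrm{out}}$ in those three rows, no larger, with the same syndrome except on the omitted row. I will pick the omitted row to be the removed-node row.

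The errors within those three rows that correspond to faces of the reduced lattice are exactly the boundary errors of $\cT$. I still need to check that faces using two removed nodes, or lying along the row, do not occur, or cancel. This is the step I expect to be the main obstacle. The row indexing, meaning which colour is omitted and which row is ``third'', must be matched carefully against Figure~\ref{fig:restricted_lattice}.

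Corners need separate treatment. I would process the three sides sequentially and check that projecting for one side does not push errors outside another side. Since projection moves errors only toward the line, and each projected edge is replaced by the face in the half plane, errors already inside the other half planes stay inside. Errors near a corner may need a final local exhaustive fix, with the single-check corner errors absorbing any leftovers.

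Each projection step is a linear-time graph operation: projecting vertices, deleting loops and reducing multi-edges mod 2. So $\cE'$ is found in time polynomial in $|\cE|$ and $d$, and at every stage the weight does not increase.
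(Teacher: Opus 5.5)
Your skeleton, one application of Lemma~\ref{l:project-half-plane} per side of $\cT$, is the paper's. However, the step you flag as the main obstacle is exactly where your argument fails, and the fix you sketch rests on a false claim. The projection for a side $s$ lands errors in three rows that are infinite lines extending past both ends of $s$. So the faces it produces are boundary errors of $\cT$ only within the extent of $s$, not ``exactly the boundary errors of $\cT$''. Nor do errors inside the other two half-planes stay inside. The outward normals $n_1,n_2,n_3$ of the three sides sum to zero, so any displacement $t$ with $t\cdot n_1<0$ (towards the line of side $1$) has $t\cdot n_2+t\cdot n_3>0$, i.e.\ it moves outwards relative to another side. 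Concretely, the part beyond $s$ of the $60^\circ$ wedge bounded by the other two sides is wider than $s$. Errors of $\cE_{\mathrm{out}}$ near the edges of that wedge are therefore carried onto the continuation of the strip of $s$ beyond a corner of $\cT$. So after three passes you must expect errors outside $\cT$. Your ``final local exhaustive fix'' comes with no argument that it is local or that it does not increase the weight.

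The missing idea, which is how the paper concludes, is that no repair is needed. After all three projections, every error outside $\cT$ lies in a strip along the extension of a side, and such errors are joined to $\cT$ only through the corner nodes, which are removed checks of the reduced lattice. Hence taking $\cE'$ to be the errors of the projected set that lie in $\cT$ changes no check of $\cT$ and can only lower the weight.

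You also need to fix the indexing, because the lemma forces the unconstrained row to be the \emph{third} one. The half-plane must be the outer one whose first two rows are the two boundary check rows of $\cT$ along $s$ (the two colours present on that boundary), and whose third row is the removed row. Your first choice (removed row first, then inwards) is the wrong half-plane: the syndrome is not confined to three rows, and the free row would lie inside $\cT$. Your later ``the removed row and the two adjacent rows'', read literally, makes the removed row the middle one. With the correct choice every face in the strip has exactly one vertex in each of the three rows, which also disposes of your worry about faces lying along a row.
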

\begin{proof} 
  We think of $\cT$ as being the intersection of three half-planes. We
  modify the error-set $\cE$ to avoid each half plane in turn as
  described in the previous lemma. Call this set $\cE_1$. We would
  like to set $\cE'=\cE_1$, but we need to be a little careful since the
  modification in one direction may break what we fixed on the earlier
  sides -- in other words $\cE_1$ may contain errors outside
  $\cT$. The errors outside of $\cT$ cannot be arbitrarily placed --
  they must all be on one of lines extending a side of $\cT$. In
  particular, all such errors are only connected to the errors in
  $\cT$ through a corner of $\cT$. Since these corners are `omitted'
  checks in the reduced lattice, setting $\cE'$ to be all the
  errors in $\cE_1$ that are contained in $\cT$ gives the required
  set.
\end{proof}

\section{Conclusion and Further Work}\label{s:further-work}
In this paper we have shown that we can approximately decode the
$6.6.6$ code capacity colour code with $X$-errors in (enormous) polynomial
time. The key point of our work is that decoding approximately is \emph{not}
NP-hard, and this opens the door to `practical' polynomial time
algorithms approximately decoding the colour code. 

This naturally raises the question of whether the ideas in our proof
can be modified to give a practical polynomial. We chose to present
the simplest algorithm and proof, without aiming to optimise the
bounds. Indeed, the bounds we have given can be very substantially
improved, but would still be impractically large. In fact, by limiting
the number of portals that are used in any rhombus (again following
Arora), the degree of polynomial can be reduced to quasi-linear (order
$d(\log d)^{O(1)}$) -- however, the leading coefficient would be
enormous and the algorithm would still be impractical, we defer this
to a later paper.

\begin{question}
  Is there a \defnemph{practical} polynomial time algorithm approximating the
  minimum weight decoding to within a factor $1+\eps$?
\end{question}
Note, the fact that exact decoding is NP-hard shows that the runtime
must grow super-polynomially in $1/\eps$, so this is unlikely to be
practical for all $\eps$. But it is entirely possible that there is a
sensible algorithm approximating to within $1\%$ say.

As we remarked in the introduction, the restriction decoder of
Delfosse and Kubica~\cite{kubicaEfficientColorCode2023} does guarantee
(under code capacity noise) to give a decoding which is at most three
times the minimum weight.  So perhaps the following question, whilst
imprecise, is better.

\begin{question}
  For what $\eps$ is there a practical polynomial time algorithm approximating the
  minimum weight decoding to within a factor $1+\eps$?
\end{question}
One natural candidate is the M\"obius decoder of Sahay and
Brown~\cite{sahayDecoderTriangularColor2022}. As far as we are aware
it is plausible (but not proven) that that approximates the minimum
weight decoding to within a factor of $7/6$.

The second question is to what extent these results can be modified or extended to
apply to other codes and noise models.  Some of these follow with
minor modifications to this work -- for example, many two dimensional
problems could be dealt with using similar techniques. In particular, it
is simple to modify our proof to the case to depolarising noise;
essentially we just allow each face to be in one of four states --
either no error or one of $X,Y,Z$ error -- and then solve in the same
way.

However, phenomenological noise or circuit level noise would naturally
be three dimensional and the techniques we have used, as presented in
this paper, do not apply there. The
  first issue is that, in order for Lemma~\ref{l:mod} to
  work we need more portals -- roughly $\log L$ in each direction. As
  it stands that would make the bound of order $2^{(\log L)^2}$ which
  is super-polynomial. This can be avoided with a little care by
  limiting the number of portals that are actually used (as Arora did
  in his proof for Euclidean TSP in higher dimensions). The second
  issue is that we need to make sure that the modifications we do in
  one direction to make the set portal respecting do not break what we
  have done in other directions (the higher dimensional analogue of
  Lemma~\ref{l:boundary-tile}). Whilst this can be done it is rather more intricate.

  We have been able to extend our results to apply to a large class of
  codes that embed in Euclidean space, including the colour code with
  circuit level noise as discussed above, and a paper detailing these
  results is in preparation.

A very general class of codes of this type is that of geometrically
local codes. Roughly, these are codes that embed in a lattice in
$\R^D$ for some $D$ with the property that all checks have diameter
bounded by some constant independent of the code distance. These are
the codes for which the famous BPT~\cite{Bravyi2010} bound on the
encoding rate applies.

\begin{question}
  Is it the case that for every geometrically local code (in any
  dimensional Euclidean space) and every $\eps>0$ there is a
  polynomial time algorithm $\mathbf{A_\eps}$ approximating the
  minimum weight decoding to within a factor of $1+\eps$?
\end{question}

\section{Acknowledgements}

We would like to thank Ben Barber for discussions on the underlying
combinatorics, Mark Turner for discussions on the colour code, and
Joan Camps for further comments on the manuscript.  Finally, we thank
Steve Brierley and our colleagues at Riverlane for creating a
stimulating environment for research.

\bibliographystyle{plain} 
 
\bibliography{articles_tidied.bib} 

\section*{Appendix}
\begin{lemma}\label{l:rhombus-hexagon}
  Suppose that $\cS$ is a syndrome of diameter $d$ lying inside the $L\times L$ syndrome box and including the point $(d,d)$. Then there is a minimum
  weight error-set $\cE$ generating $\cS$ where all the errors of
  $\cE$ lie inside the syndrome box.
\end{lemma}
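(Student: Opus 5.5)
The plan is to project a minimum weight error-set into the syndrome box one half-plane at a time, using the projection argument of Lemma~\ref{l:project-half-plane} in the infinite lattice. The first step is to observe that the $L\times L$ syndrome box is an intersection of half-planes bounded by lattice lines: two parallel to the horizontal axis and two parallel to the diagonal axis. Because $L\ge 2d$ and the syndrome contains $(d,d)$ and has diameter $d$, every defect lies at distance at least roughly $d/2\ge 1$ lattice rows inside each of these four boundary lines. I would enlarge slightly if needed so that, for each side, the syndrome avoids the outermost three rows (one of each colour) adjacent to that side.

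The second step handles one side. Take a minimum weight $\cE$ and a side $s$ of the box, and let $H$ be the closed half-plane outside the box bounded by the line of $s$, together with the first three rows just inside. I would apply the projection idea of Lemma~\ref{l:project-half-plane}, mirrored so the half-plane points outward, but project in the \emph{opposite} direction, onto rows inside the box. Concretely, delete one colour class, and view each error as an edge on the remaining two colours, giving a graph $G$ whose odd-degree vertices are exactly the defects of those two colours. Then fold every vertex lying beyond $s$ onto its reflection, or onto the nearest row of the same colour inside the box. Vertices outside the box have even degree, since they are not defects. Parities of vertices inside are therefore unchanged, and loops and multi-edges can be reduced mod 2 without increasing the edge count.

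The main obstacle is the omitted colour. The projection preserves the syndrome of only two colours, and Lemma~\ref{l:project-half-plane} concedes a discrepancy on the third row. To handle this, I would note that in the infinite lattice the total syndrome in each colour is determined by the other two up to stabilisers. More precisely, I would argue that the edge-graph picture loses nothing: two error-sets with the same syndrome on two colours differ by a set with defects only on the third colour class. Such a set's defects in that colour must have a homological parity constraint forcing them to pair up through a face-of-one-colour structure, and one can then show that a weight-nonincreasing correction exists.

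If that fails, my fallback is a cleaner exchange argument. Choose, among minimum weight decodings, one minimising the number of errors outside the box. Take the maximal connected cluster of errors crossing a boundary line. Reflecting the part beyond the line across it, which is a lattice symmetry preserving colours for a suitable choice of line, gives a set with the same syndrome, since no defects lie outside. Its weight is no larger after cancellations, and it has fewer errors outside the box, contradicting the choice.

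Finally, I would iterate over the four sides. Reflection or folding toward the box across one side never moves errors outside another side, because the box is convex and the reflections act along one axis. The process therefore terminates with a minimum weight $\cE$ inside the syndrome box.
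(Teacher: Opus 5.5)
Your fallback has the right idea, and it is essentially the paper's. You fold the errors lying strictly beyond a lattice line that bounds a region containing $\cS$. Reflection in a lattice line is a colour-preserving symmetry, and no defect lies strictly beyond the line (defects on it are fixed), so the syndrome is unchanged and the weight does not increase. The paper folds across the sides of the regular hexagon $H$ of radius $d$ centred at $(d,d)$ rather than the rhombus sides, but that difference is harmless.

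The genuine gap is the progress step. You claim that one fold leaves fewer errors outside the box, and that folding across one side never pushes errors beyond another. Both are false for a $60^\circ$ rhombus, because the perpendicular to a horizontal side is not the diagonal axis. In the paper's coordinates, folding in the bottom side is $(a,b)\mapsto(a+b,-b)$. So an error below the bottom side with $0\le a<-b$ lands beyond the left side, and one more than $L$ rows below lands above the top side. Worse, suppose all outside errors lie in the wedge $a<0$, $b<0$ at the origin, with $|a|,|b|\le L$. Then no single fold reduces their number:
\begin{itemize}
\item folding in the bottom side leaves them beyond the left side;
\item folding in the left side, $(a,b)\mapsto(-a,a+b)$, leaves them beyond the bottom side;
\item they are not beyond the other two sides, so those folds do nothing.
\end{itemize}
Nothing in your argument excludes such configurations for the minimum weight decoding you chose. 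So the contradiction does not follow, and a single pass over the four sides does not finish the job.

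What is missing is a quantity that every fold strictly decreases. For example, fix $c$ strictly inside the box and let $\Phi(\cE)=\sum_{x}|x-c|^2$ over the centroids $x$ of the errors. If $x$ is strictly beyond a side line $\lambda$, then $|r_\lambda(x)-c|^2=|x-c|^2-4\,\mathrm{dist}(x,\lambda)\,\mathrm{dist}(c,\lambda)$, so every nontrivial fold strictly decreases $\Phi$. There are finitely many minimum weight decodings, since each connected component must contain a defect. A minimum weight decoding minimising $\Phi$ therefore has no error beyond any side line, i.e.\ it lies in the box. The paper instead allows multi-step sequences of folds. It uses that folds never increase the number of errors outside $H$, and that the tessellation by translates of $H$ can be traversed by `flipping' $H$ over its edges.

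Your first route should simply be dropped, for two reasons.
\begin{itemize}
\item Lemma~\ref{l:project-half-plane} controls only two colours, and the third-colour syndrome is not determined by the other two: two triangles sharing an edge have exactly two defects, both of one colour.
\item Defects can lie on the box sides. For example, $(0,d)$ is at distance $d$ from $(d,d)$, so your `distance $d/2$' margin is false.
\end{itemize}
Folding faces rather than graph vertices preserves all three colours automatically, which is exactly the reflection argument.
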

\begin{proof}
  Let $H$ be the regular hexagon of `radius' $d$ centred at the point
  $(d,d)$ -- see Figure~\ref{fig:rhombus-hexagon}. We see that $H$ is
  contained the syndrome box and that, by the diameter constraint, the
  entire syndrome is inside $H$. Thus it suffices to show that there
  is a minimum weight error-set contained entirely inside $H$.
  Suppose we have any error-set $\cE'$ in the infinite lattice
  generating $\cS$. We can reflect the errors along the (infinite)
  line extending any side of $H$ and, since there are no defects in
  the reflected region this does not change the syndrome. This
  reflection does not increase the number of errors in the error-set
  (it can decrease the number if an error and its reflection are both
  in $\cE'$ and cancel).

  To complete the proof we just need to show that we can do an
  appropriate sequence of reflections to make all the errors lie in
  $H$. Since the reflection process never increases the number of
  errors outside $H$ we just need to show that we can guarantee to
  reduce the number of errors by some collection of reflections. To do
  this just consider an error $e$ not in $H$ and the infinite
  tessellation of the plane by translates of $H$. The error $e$ lies
  in one of these tiles $H'$ say. It suffices to show that we can
  apply a sequence of reflections along the edges of $H$ which maps
  $H'$ to $H$. One way of seeing this is to think of $H$ as a
  hexagonal coin and consider flipping it repeatedly over any of its
  edges until it gets to $H'$. This corresponds to a sequence of
  reflections mapping the error $e$ into~$H$.
  \end{proof}
\begin{figure}[t]
  \begin{center}
  \includegraphics[width=0.8\columnwidth]{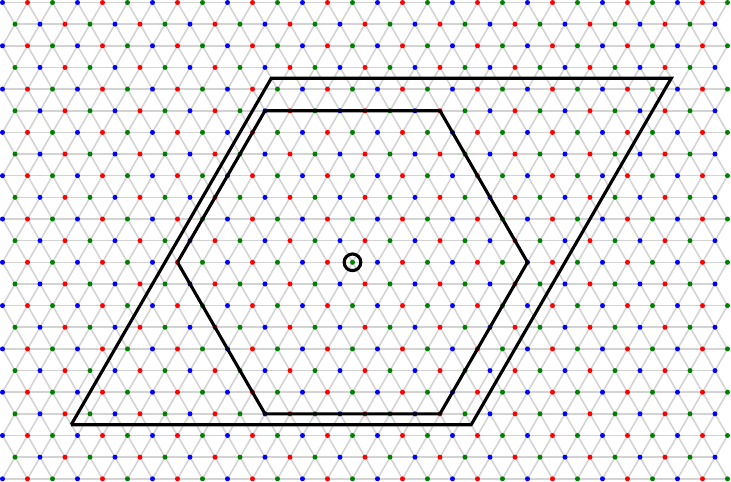}
    \end{center}
  \caption{\label{fig:rhombus-hexagon}The syndrome box (the rhombus) with the hexagon $H$ inside
    it. The hexagon is oriented so that the sides are along lattice
    lines and, in particular, we see that the lattice has reflection
    symmetry about any of the sides. }
\end{figure}

\end{document}